\newcommand{\bigdoublecap}{%
  \mathop{\vphantom{\bigcap}\vcenter{\hbox{\text{%
    \ooalign{$\displaystyle\bigcap$\cr
             \hidewidth
             \raisebox{-.1ex}{\resizebox{.7\width}{.875\height}{$\displaystyle\bigcap$}}%
             \hidewidth\cr
    }%
  }}}}%
}
\newtheorem{defi}{Definition}
\newtheorem{rema}{Remark}
\newtheorem{prop}{Proposition}
\newtheorem{teo}{Theorem}
\newtheorem{col}{Corollary}
\definecolor{aliceblue}{rgb}{0.94, 0.97, 1.0}
\definecolor{anti-flashwhite}{rgb}{0.90, 0.90, 0.91}
\begin{document}

%
\setcounter{page}{1}
\publyear{2021}
\papernumber{0001}
\volume{178}
\issue{1}
%

\title{On Typical Hesitant Fuzzy Languages and Automata}

\address{valdigleis.costa@univasf.edu.br}

\author{Valdigleis S. Costa\\
Universidade Federal do Vale do São Francisco \\
Colegiado de Ciência da Computação \\ Salgueiro-PE, Brazil\\
valdigleis.costa{@}univasf.edu.br
\and Benjamín C. Bedregal \thanks{Thanks to CNPQ for the research funding granted through the project 311429/2020-3.} 
\and Regivan H. N. Santiago \thanks{Thanks to CNPQ for the research funding granted through the project 312053/2018-5.}\\
Universidade Federal do Rio Grande do Norte \\
Departamento de Informática e Matemática Aplicada \\ Natal-RN, Brazil \\
bedregal{@}dimap.ufrn.br \\
regivan{@}dimap.ufrn.br } 

\maketitle

\runninghead{Valdigleis, Benjamín $\&$ Regivan}{On THFL and Automata}

\begin{abstract}
 The idea of nondeterministic typical hesitant fuzzy automata is a generalization of the fuzzy automata presented by Costa and Bedregal. This paper, presents the sufficient and necessary conditions for a typical hesitant fuzzy language to be computed by nondeterministic typical hesitant fuzzy automata. Besides, the paper introduces a new class of Typical Hesitant Fuzzy Automata with crisp transitions, and we will show that this new class is equivalent to the original class introduced by Costa and Bedregal.
\end{abstract}

\begin{keywords}
    Typical Hesitant Fuzzy Sets, Fuzzy Languages, Automata, Nondeterminism
\end{keywords}

\section{Introduction}

The fuzzy computation theory emerged as a model based on fuzzy sets \cite{zadeh1965} capable of extrapolating the Church's thesis \cite{silva2016}. The machine models more studied by fuzzy computation theory are fuzzy Turing machines \cite{silva2016, bedregal2008TM} and the fuzzy finite automata \cite{hopcroft2006}. Finite automata are a computational model that has finite memory limitation. This model is widely used for modeling applications in hardware and software \cite{Farias2017}, and it is also essential for building compilers \cite{hopcroft2006, aho2003}. In recent years, several generalizations of the concept of finite automata have been presented, such as fuzzy automata  \cite{Stanimirovic2018,  wei2018, mordeson2002, mizumoto1969, Costa2018}, probabilistic automata \cite{rabin1963, paz2014, abney1999}, and quantum automata  \cite{ying2000, moore2000, qiu2004, qiu2001, hirvensalo2011}. 

With the development of the various extensions for fuzzy sets, some generalizations for fuzzy automata have been presented, for example, interval-valued fuzzy automata \cite{ravi2010, ravi2010myhill} and intuitionistic fuzzy automata \cite{choubey2009}. Recently, Costa and Bedregal in \cite{valdigleis2019} using the concepts of typical hesitant fuzzy set \cite{torra01, benja2014}, introduced the concept of Nondeterministic Typical Hesitant Fuzzy Automata and presented a subclass, called Deterministic Typical Hesitant Fuzzy Automata, which generalizes the notion of deterministic finite automata. They showed how it is possible to obtain a Deterministic Typical Hesitant Fuzzy Automata from a Nondeterministic Typical Hesitant Fuzzy Automata. However, the removal of the nondeterminism presented by Costa and Bedregal does not preserve the language.

Costa and Bedregal defined nondeterministic typical hesitant fuzzy automata as machines capable to compute typical hesitant fuzzy language \cite{valdigleis2019}. However, it does not characterize a class for these languages. Thus an open question exists, \textit{every typical hesitant fuzzy language can be computed by a typical hesitant fuzzy automaton}?
 
In this article, the theory of typical hesitant fuzzy automata will be strengthened, characterizing the languages computed by nondeterministic typical hesitant fuzzy automata. Moreover, we will show the nondeterminism does not increase the power of typical hesitant fuzzy automata.  This work has the following division, first this introduction, then in Section 2, presents the mathematical basis for this work. Section 3, presents a characterization for the languages computed by typical hesitant fuzzy automata. Section 4, display a new class of typical hesitant fuzzy automata and some results.

\section{Preliminaries}

In this section, we present all the basic definitions and notations sed throughout the text.

\subsection{Languages and Finite Automata}

As said in \cite{hopcroft2006},  an alphabet is any finite non-empty set $\Sigma$. The elements from $\Sigma$ are called letters, and a word on $\Sigma$ is any finite sequence of letters. The symbol $\lambda$ denotes the empty word, i.e., the word without letters from $\Sigma$. The set $\Sigma^*$ is a free monoid generated by $\Sigma$ concerning the operation of concatenation \cite{valdigleis2019}. The set $\Sigma^+ = \Sigma^* - \{\lambda\}$, and any $L \subseteq \Sigma^*$ is called language.

\begin{defi}\label{defi:AFD}
    \cite{hopcroft2006} A deterministic finite automaton (DFA) is a quintuple $A = \langle Q, \Sigma, \delta, q_0, F \rangle$ where $Q$ is finite non-empty set of states, $\delta: Q \times \Sigma \rightarrow Q$ is a the transition function\footnote{This paper always assumes complete (N)DFA, meaning that the transition function is total.}, $q_0 \in Q$ is the initial state and $F \subseteq Q$ is the set of final states.
\end{defi}

The transition function $\delta$ can extend into a function $\widehat{\delta}: Q \times \Sigma^* \rightarrow Q$ by the following recursion:
\begin{eqnarray}\label{eq:ExtensaoDaFuncaoTransicaoDelta}
    \widehat{\delta}(q, \lambda)& = & q \\
    \widehat{\delta}(q, wa)& = & \delta(\widehat{\delta}(q, w), a)	
\end{eqnarray}
where $a \in \Sigma$ and $w \in \Sigma^*$. 

\begin{defi}
    The language computed by a DFA $A$ is the set $L(A) = \{w \in \Sigma^* \mid \widehat{\delta}(q, w) \in F\}$.
\end{defi}

As said in \cite{levelt2008}, if $w = a_1\cdots a_n \in \Sigma^+$ and $w \in L(A)$ for some DFA $A$, then there exists a finite sequence of states $q_1, \cdots, q_n$ such that $\delta(q_0, a_1) = q_1, \cdots, \delta(q_{n-1}, a_n) = q_n$ with $q_n \in F$. On the other hand, $\lambda \in L(A)$ if and only if $q_0 \in F$.

\begin{defi}
    \cite{hirvensalo2011} A language $L$ is regular, whenever $L$ is finite or $L$ is obtained from regular languages $L_1$ and $L_2$ by either finite union, concatenation, or Kleene closure\footnote{For more details about union, concatenation and Kleene closure see \cite{hopcroft2006}}.
\end{defi}

\begin{teo}\label{teo:regular}
    \cite{hirvensalo2011} A language $L$ is regular if and only if a DFA compute it.
\end{teo}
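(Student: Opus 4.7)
The plan is to prove both implications using standard constructions, with nondeterministic automata as an intermediate tool, since closure properties are easier to establish with nondeterminism before invoking the subset construction to return to DFAs.

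For the \emph{only if} direction, I would proceed by structural induction on the buildup of the regular language $L$. Every finite language $\{w_1,\ldots,w_k\}$ is recognized by a tree-shaped DFA whose root is $q_0$, with disjoint paths labelled by each $w_i$ ending in accepting states, together with a sink state absorbing all undefined transitions. For the inductive step, assuming $L_1$ and $L_2$ are computed by DFAs $A_1$ and $A_2$, I would build $\lambda$-NFAs for $L_1 \cup L_2$, $L_1 L_2$, and $L_1^{*}$ via Thompson-style constructions: a fresh initial state with $\lambda$-transitions into $A_1$ and $A_2$ for union; $\lambda$-transitions from the final states of $A_1$ to the initial state of $A_2$ for concatenation; and a fresh accepting initial state together with $\lambda$-transitions from the old finals back to the old initial state for Kleene closure. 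The subset construction then yields an equivalent DFA in each case.

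For the \emph{if} direction, given $A = \langle Q, \Sigma, \delta, q_0, F \rangle$ with $Q = \{q_1,\ldots,q_n\}$ and $q_0 = q_1$, I would apply the Kleene algorithm. Define regular expressions $R_{ij}^{k}$ denoting the set of words that drive $A$ from $q_i$ to $q_j$ while passing only through intermediate states in $\{q_1,\ldots,q_k\}$, with base case $R_{ij}^{0}$ consisting of the letters that directly connect $q_i$ to $q_j$ (together with $\lambda$ when $i = j$), and recursion $R_{ij}^{k} = R_{ij}^{k-1} + R_{ik}^{k-1}(R_{kk}^{k-1})^{*} R_{kj}^{k-1}$. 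Then $L(A)$ is described by the finite union of $R_{1j}^{n}$ over all $q_j \in F$, exhibiting $L(A)$ as a regular language built from finite sets by union, concatenation, and Kleene closure.

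The main obstacle is the Kleene-closure step of the first direction, since the naive back-edge construction fails when the initial state is non-accepting; the standard fix is to introduce a fresh accepting start state. Beyond that, the argument is largely bookkeeping, and the correctness of the subset construction and of $\lambda$-elimination can be quoted from \cite{hopcroft2006}.
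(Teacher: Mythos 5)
The paper does not prove this theorem at all: it is quoted from \cite{hirvensalo2011} as a known result (Kleene's theorem), so there is no in-paper argument to compare against. Your proof is the standard textbook argument and is essentially correct: structural induction over the paper's inductive definition of regularity, using Thompson-style $\lambda$-NFA constructions followed by $\lambda$-elimination and the subset construction for the \emph{only if} direction, and the $R_{ij}^{k}$ dynamic programming (Kleene's algorithm) for the \emph{if} direction. Two small points worth being careful about if you were to write this out in full: first, since the paper insists on complete DFAs (total transition functions), your sink state in the base case and after determinization is genuinely needed, and you handle that; second, in the \emph{if} direction you must close the loop by observing that the language denoted by the resulting regular expression is regular in the paper's inductive sense, i.e., that it is built from finite languages (single letters, $\lambda$, $\emptyset$) by union, concatenation, and Kleene closure --- you state this, and it is immediate. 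You also correctly flag the classical pitfall in the Kleene-closure step (a non-accepting initial state inside a loop) and the standard repair with a fresh accepting start state. Nothing is missing.
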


\begin{rema}\label{rema:NumeroDeLinguagens}
	Notice that by the Chomsky’s hierarchy \cite{Costa2018}, as mentioned in proof of Theorem 2 in \cite{rabin1963}, there is only an enumerable set of regular languages.
\end{rema}

Another well-known type of finite automata is the nondeterministic finite automaton, defined below.

\begin{defi}\label{defi:AFN}
    \cite{hopcroft2006} A nondeterministic finite automaton (NFA) is a quintuple $N = \langle Q, \Sigma, \delta_N, q_0, F \rangle$ where $Q, \Sigma, q_0$ and $F$ are equal to Definition \ref{defi:AFD}, and $\delta_N: Q \times \Sigma \rightarrow 2^Q$ is the nondeterministic transition function.
\end{defi}

As discussed by Hopcroft \textit{et al.} in \cite{hopcroft2006}, it is clear that every AFD is an AFN where the inequality $\#\delta(q, a) \leq 1$ is satisfied for every pair $(q, a) \in Q \times \Sigma$, where $\#$ denote the cardinality of sets. The function $\delta_N$ can be extended into a function $\widehat{\delta_N}: Q \times \Sigma^* \rightarrow 2^Q$ by the following recursion:
\begin{eqnarray}\label{eq:ExtensaoDaFuncaoTransicaoDeltaN}
    \widehat{\delta_N}(q, \lambda)& = & q \\
    \widehat{\delta_N}(q, wa)& = & \bigcup_{q \in \widehat{\delta_N}(q, w)} \{\delta_N(q, a)\}	
\end{eqnarray}
for all $a \in \Sigma$ and $w \in \Sigma^*$. 

\begin{defi}
    The language computed by NFA $A$ is the set $L(A) = \{w \in \Sigma^* \mid \widehat{\delta}(q, w) \cap F = \emptyset\}$.
\end{defi}

According to \cite{samuel1974}, for any $n \in \mathbb{N}$, there are $n$-state NFAs recognizing languages which cannot be recognized by any DFA with less than $2^n$ states.

\begin{teo}\label{teo:regular2}
    \cite{hopcroft2006} A language $L$ is regular if and only if a NFA compute it.
\end{teo}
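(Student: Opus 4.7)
The plan is to prove both directions by appealing to Theorem \ref{teo:regular}, so that the work reduces to showing DFAs and NFAs compute exactly the same class of languages.

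For the ($\Rightarrow$) direction, I would use the fact, already observed in the excerpt, that every DFA can be viewed as an NFA in which $\#\delta(q,a) \leq 1$ for every pair $(q,a) \in Q \times \Sigma$. More precisely, given a DFA $A = \langle Q, \Sigma, \delta, q_0, F\rangle$, I would define an NFA $N = \langle Q, \Sigma, \delta_N, q_0, F\rangle$ by setting $\delta_N(q,a) = \{\delta(q,a)\}$. A straightforward induction on the length of $w \in \Sigma^*$ using the recursions in (\ref{eq:ExtensaoDaFuncaoTransicaoDelta}) and (\ref{eq:ExtensaoDaFuncaoTransicaoDeltaN}) shows that $\widehat{\delta_N}(q_0,w) = \{\widehat{\delta}(q_0,w)\}$, and hence $L(A) = L(N)$. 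Combined with Theorem \ref{teo:regular}, this yields that every regular language is computed by some NFA.

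For the ($\Leftarrow$) direction, I would carry out the classical subset (powerset) construction. Given an NFA $N = \langle Q, \Sigma, \delta_N, q_0, F\rangle$, define a DFA $A = \langle 2^Q, \Sigma, \delta, \{q_0\}, \mathcal{F}\rangle$ where, for each $S \in 2^Q$ and $a \in \Sigma$,
\begin{equation*}
\delta(S,a) = \bigcup_{q \in S} \delta_N(q,a),
\end{equation*}
and $\mathcal{F} = \{S \in 2^Q \mid S \cap F \neq \emptyset\}$. Since $Q$ is finite, $2^Q$ is also finite, so $A$ is a legitimate DFA. The key lemma is that $\widehat{\delta}(\{q_0\}, w) = \widehat{\delta_N}(q_0, w)$ for every $w \in \Sigma^*$, which is proved by induction on $|w|$: the base case follows from the two definitions of the extended transition for $\lambda$, and the inductive step unwinds (\ref{eq:ExtensaoDaFuncaoTransicaoDelta}) and (\ref{eq:ExtensaoDaFuncaoTransicaoDeltaN}) and uses the definition of $\delta$ to interchange the finite union with the application of $\delta_N$. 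Acceptance criteria then match: $w \in L(A)$ iff $\widehat{\delta}(\{q_0\},w) \in \mathcal{F}$ iff $\widehat{\delta_N}(q_0,w) \cap F \neq \emptyset$ iff $w \in L(N)$. Applying Theorem \ref{teo:regular} again shows $L(N)$ is regular.

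I do not expect any serious obstacles; the two routine inductions on word length are the only technical content, and the main subtlety is simply the bookkeeping between singletons $\{q\}$ and the set-valued extended transition function $\widehat{\delta_N}$, so that the base case of the inductive equality $\widehat{\delta}(\{q_0\},\lambda) = \widehat{\delta_N}(q_0,\lambda)$ is interpreted consistently (reading $\widehat{\delta_N}(q,\lambda) = \{q\}$ in (\ref{eq:ExtensaoDaFuncaoTransicaoDeltaN}), in agreement with the NFA acceptance convention used in the excerpt).
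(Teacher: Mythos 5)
The paper does not prove this statement at all; it is imported verbatim from Hopcroft \emph{et al.} as a known result, so there is no in-paper argument to compare yours against. Your proof is the classical and correct one: reduce to Theorem~\ref{teo:regular}, observe that a DFA is an NFA with singleton transition sets (with $\widehat{\delta_N}(q_0,w)=\{\widehat{\delta}(q_0,w)\}$ by induction on $|w|$), and use the subset construction with the invariant $\widehat{\delta}(\{q_0\},w)=\widehat{\delta_N}(q_0,w)$ for the converse. Two remarks on bookkeeping, both of which you handle in the right spirit: the paper's acceptance condition for NFAs is misstated as $\widehat{\delta}(q,w)\cap F=\emptyset$ (it must be $\neq\emptyset$ for the theorem to be true, and that is the reading you use), and the displayed recursion for $\widehat{\delta_N}$ has the base case returning $q$ rather than $\{q\}$ and a shadowed bound variable in the inductive clause --- your proposed reading $\widehat{\delta_N}(q,wa)=\bigcup_{p\in\widehat{\delta_N}(q,w)}\delta_N(p,a)$ is the intended one and is what makes your key lemma go through.
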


\subsection{THFE, THFL, and THFA}

According to \cite{torra01, matzenauer2019}, a HFS is defined in terms of a function which return sets of membership degrees for each element of their domain $\mathcal{U} \neq \emptyset$. In 2014, Bedregal \textit{et al.} \cite{benja2014} introduced a particular case of HFS, called Typical Hesitant Fuzzy Set, or simply THFS, which considers some restrictions.

\begin{defi}\label{def:conjuntoshesitantestipicos}
	\cite[Definition 8]{benja2014} Let $\mathbb{H} \subseteq 2^{[0,1]}$ be the set of all finite non-empty subsets of the interval [0,1], and let $\mathcal{U}$ be a non-empty set. A THFS on $\mathcal{U}$ is  a function $\psi: \mathcal{U} \rightarrow \mathbb{H}$.
\end{defi}

\begin{rema}
    Here will be considered that $\langle [0,1], \vee, \wedge, 0, 1\rangle$ is a distributive lattice concerning the usual order $\leq$ on real numbers.
\end{rema}

Each $X \in \mathbb{H}$ is called a Typical Hesitant Fuzzy Element (THFE). The set $\mathbb{H}_1 = \{X \in \mathbb{H} \mid \#X = 1\}$ is called of degenerate elements set.  Several operators on $\mathbb{H}$ were proposed in \cite{benja2014, matzenauer2019, xia01, rod01, gar01, santos2015, bedregal2014typical}. In particular, Costa and Bedregal \cite{valdigleis2019}, have presented the inf-combination and sup-combination.

\begin{defi}\label{def:infcombinacao}
	\cite[Definition 3.1.]{valdigleis2019} For $X,Y \in \mathbb{H}$ the function $\otimes: \mathbb{H} \times \mathbb{H} \rightarrow \mathbb{H}$ is computed by,
	\begin{equation}
	    X \otimes Y = \{x \wedge y \mid x \in X, y \in Y\}
	\end{equation}
	is called inf-combination of $X$ and $Y$.
\end{defi}

\begin{defi}\label{def:supcombinacao}
	\cite[Definition 3.2.]{valdigleis2019} For $X,Y \in \mathbb{H}$ the function $\sqcup: \mathbb{H} \times \mathbb{H} \rightarrow \mathbb{H}$ is computed by,
	\begin{equation}
	    X \sqcup Y = \{x \vee y \mid x \in X, y \in Y\}
	\end{equation}
	is called sup-combination of $X$ and $Y$.
\end{defi}


The operations $\wedge$ and $\vee$ are, respectively, the operations of infimum and supremum on the distributive lattice $[0,1]$, for complete notions related to partial order and lattice theory, the reader can refer to \cite{gratzer2011lattice}. According to \cite{valdigleis2019}, $\mathbb{H}$ has the following properties:

\begin{enumerate}[leftmargin=*,label=\normalfont{($H$\arabic*)}, labelsep=4.9mm]
	\item The structures $\langle \mathbb{H}, \otimes, \{1\} \rangle$ and $\langle \mathbb{H}, \sqcup, \{0\} \rangle$ are commutative and idempotent monoids.
	\item The element $\{1\}$ is an annihilator of $\sqcup$, and $\{0\}$ is an annihilator of $\otimes$.
	\item $\otimes$ distribute over $\sqcup$ and $\sqcup$ distribute over $\otimes$.
\end{enumerate}

According to \cite{valdigleis2019} since $\langle \mathbb{H}, \sqcup, \{0\} \rangle$ is a monoid, the operation $\sqcup$ is extended for the $n$-dimensional case. 

\begin{defi}
    \cite{valdigleis2019} Given $X_1, X_2, \cdots, X_n \in \mathbb{H}$,
    \begin{eqnarray}\label{eq:SupCombinacaoNarria}
        \bigsqcup_{i = 1}^n X_i = \Big(\bigsqcup_{i=1}^{n-1} X_i\Big)\sqcup X_n
    \end{eqnarray}
\end{defi}

\begin{rema}\label{rema:elementosGerados}
    Since $\otimes$ and $\sqcup$ are both commutative, associative, and idempotent, given any finite non-empty set $\kappa \subset \mathbb{H}$, the set of elements generated by the (sup) inf-combination on the set $\kappa$ is also finite.
\end{rema}

The ordering problem THFS and THFE have yet been studied in \cite{gar01, santos2015, xu2011, benja2021}. Now we propose a new relation on $\mathbb{H}$ based in the sup-combination, and we will show that this order generalizes the usual order of real numbers.

\begin{defi}\label{Defi:Order}
    Given $X, Y \in \mathbb{H},X \sqsubseteq Y \Longleftrightarrow X \sqcup Y = Y$
\end{defi}

\begin{prop}\label{prop:Ordem}
    $\sqsubseteq$ is a partial order on $\mathbb{H}$.
\end{prop}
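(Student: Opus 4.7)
The plan is to verify the three defining properties of a partial order (reflexivity, antisymmetry, transitivity) by direct appeal to the algebraic properties of $\sqcup$ already recorded in property $(H1)$, namely that $\langle \mathbb{H}, \sqcup, \{0\}\rangle$ is a commutative and idempotent monoid. No deeper machinery seems to be needed; the proposition is essentially a standard observation that the ``join order'' of any semilattice is a partial order, and we are only specializing it to $\sqcup$.

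First, for reflexivity, I would note that idempotency of $\sqcup$ (from $(H1)$) immediately gives $X \sqcup X = X$, hence $X \sqsubseteq X$ for every $X \in \mathbb{H}$. Next, for antisymmetry, assume $X \sqsubseteq Y$ and $Y \sqsubseteq X$, so that $X \sqcup Y = Y$ and $Y \sqcup X = X$. Since $\sqcup$ is commutative, $X = Y \sqcup X = X \sqcup Y = Y$, giving $X = Y$.

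Finally, for transitivity, suppose $X \sqsubseteq Y$ and $Y \sqsubseteq Z$, i.e., $X \sqcup Y = Y$ and $Y \sqcup Z = Z$. Using associativity of $\sqcup$ (again from $(H1)$),
\begin{equation*}
X \sqcup Z \;=\; X \sqcup (Y \sqcup Z) \;=\; (X \sqcup Y) \sqcup Z \;=\; Y \sqcup Z \;=\; Z,
\end{equation*}
so $X \sqsubseteq Z$. The three properties together establish that $\sqsubseteq$ is a partial order on $\mathbb{H}$.

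I do not anticipate any real obstacle: the argument is a one-line chase in each case, and the only subtlety is to be careful about which property of the monoid $\langle \mathbb{H}, \sqcup, \{0\}\rangle$ is invoked at each step (idempotency for reflexivity, commutativity for antisymmetry, associativity for transitivity). The statement would actually be false without idempotency, so it is worth emphasizing in the write-up that $(H1)$ is doing all the work.
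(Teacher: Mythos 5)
Your proof is correct and follows essentially the same route as the paper's: reflexivity from idempotency, antisymmetry from commutativity together with the two hypotheses, and transitivity via the associativity chain $X \sqcup Z = X \sqcup (Y \sqcup Z) = (X \sqcup Y) \sqcup Z = Y \sqcup Z = Z$. No gaps.
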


\begin{proof}
    Consider that $X, Y, Z \in \mathbb{H}$ so: 
    \begin{itemize}
        \item[(i)] we have that, $X \sqcup X \stackrel{(H1)}{=} X$. So $\sqsubseteq$ is reflexive;
        \item[(ii)] suppose that $X \sqsubseteq Y$ and $Y \sqsubseteq X$, then $X \stackrel{Hyp.}{=} Y \sqcup X  \stackrel{(H1)}{=} X \sqcup Y \stackrel{Hyp.}{=} Y$ so $\sqcup$ is anti-symmetric and
        \item[(iii)] suppose that $X \sqsubseteq Y$ and $Y \sqsubseteq Z$, thus we have that, $X \sqcup Z \stackrel{Hyp.}{=} X \sqcup (Y \sqcup Z) \stackrel{(H1)}{=} (X \sqcup Y) \sqcup Z \stackrel{Hyp.}{=} Y \sqcup Z \stackrel{Hyp.}{=} Z$
        therefore, $X \sqsubseteq Z$, so $\sqsubseteq$ is transitive.
    \end{itemize}
    Since that $\sqsubseteq$ is reflexive, anti-symmetric and transitive, the relation $\sqsubseteq$ is a partial order on $\mathbb{H}$.
\end{proof}

\begin{teo}
    The order $\sqsubseteq$ generalize the usual order $\leq$ on $[0,1]$.
\end{teo}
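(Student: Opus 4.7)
The plan is to exhibit an order-embedding of $([0,1], \leq)$ into $(\mathbb{H}, \sqsubseteq)$ via the natural map $x \mapsto \{x\}$, whose image lies in the degenerate set $\mathbb{H}_1$. Since the claim is that $\sqsubseteq$ \emph{generalizes} $\leq$, it suffices to show that for every $x, y \in [0,1]$, the equivalence $x \leq y \iff \{x\} \sqsubseteq \{y\}$ holds.

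First I would observe that, by Definition \ref{def:supcombinacao}, the sup-combination of two degenerate elements is again degenerate and given explicitly by
\begin{equation*}
    \{x\} \sqcup \{y\} = \{x \vee y\},
\end{equation*}
since the only way to pick one element from $\{x\}$ and one from $\{y\}$ produces a single pair whose supremum is $x \vee y$.

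Next I would unfold the definition of $\sqsubseteq$: by Definition \ref{Defi:Order}, $\{x\} \sqsubseteq \{y\}$ holds exactly when $\{x\} \sqcup \{y\} = \{y\}$, which by the previous step is equivalent to $\{x \vee y\} = \{y\}$, and hence to $x \vee y = y$. Since $\vee$ is the supremum in the lattice $\langle [0,1], \vee, \wedge, 0, 1 \rangle$, this last equation is in turn equivalent to $x \leq y$. Chaining these equivalences gives the desired result, and since the assignment $x \mapsto \{x\}$ is clearly injective, it constitutes an order-embedding of $([0,1], \leq)$ into $(\mathbb{H}, \sqsubseteq)$.

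There is no real obstacle: the argument is a direct computation using the explicit form of $\sqcup$ on singletons, followed by unwinding Definition \ref{Defi:Order}. The only subtlety is to state precisely what ``generalizes'' means, which I would fix at the outset by identifying $[0,1]$ with $\mathbb{H}_1$ through $x \mapsto \{x\}$.
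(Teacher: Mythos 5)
Your proposal is correct and follows essentially the same route as the paper's own proof: restrict to the degenerate elements $\mathbb{H}_1$, compute $\{x\} \sqcup \{y\} = \{x \vee y\}$, and chain the equivalences $\{x\} \sqsubseteq \{y\} \iff x \vee y = y \iff x \leq y$. Your explicit framing of the map $x \mapsto \{x\}$ as an order-embedding is a slight refinement of what the paper leaves implicit, but the argument is the same.
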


\begin{proof}
    Let $\mathbb{H}_1$ the set of degenerate elements of $\mathbb{H}$, for all $\{x\}, \{y\} \in \mathbb{H}_1$ we have that,
    $$\{x\} \sqsubseteq \{y\} \Longleftrightarrow  \{x\} \sqcup \{y\} = \{y\} \Longleftrightarrow x \vee y = y \Longleftrightarrow x \leq y$$
    and $x,y \in [0,1]$, thus completing the proof.
\end{proof}

Moreover, the following properties are easily verified.

\begin{enumerate}[leftmargin=*,label=\normalfont{($R$\arabic*)}, labelsep=4.9mm]
    \item If $X \sqsubseteq Y$, then $(X \sqcup Z) \sqsubseteq (Y \sqcup Z)$.
    \item If $X \sqsubseteq Y$, then $(X \otimes Z) \sqsubseteq (Y \otimes Z)$.
    \item $\{0\} \sqsubseteq X \sqsubseteq \{1\}$ for all $X \in \mathbb{H}$.
\end{enumerate}

\begin{prop}\label{prop:AbsorcaoDireita}
    For $\circ \in \{\otimes, \sqcup\}$, if $X \sqsubseteq Y$, then $X \sqsubseteq (X \circ Y)$.
\end{prop}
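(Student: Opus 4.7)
The plan is to split into the two cases $\circ = \sqcup$ and $\circ = \otimes$, treating each by direct algebraic manipulation using the monoid properties (H1) and (H3) together with Definition \ref{Defi:Order}.

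For $\circ = \sqcup$, I expect the argument to be immediate from associativity and idempotence of $\sqcup$ (property (H1)): rewriting $X \sqcup (X \sqcup Y)$ as $(X \sqcup X) \sqcup Y = X \sqcup Y$ shows directly that $X \sqsubseteq X \sqcup Y$. Notice that this direction does not even require the hypothesis $X \sqsubseteq Y$.

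For $\circ = \otimes$, the key trick is to use the hypothesis in the form $Y = X \sqcup Y$ in order to rewrite $X \otimes Y$ as $X \otimes (X \sqcup Y)$, and then apply the distributivity of $\otimes$ over $\sqcup$ (property (H3)) together with the idempotence of $\otimes$ (property (H1)). A short calculation of the shape
\[
X \otimes Y \;=\; X \otimes (X \sqcup Y) \;=\; (X \otimes X) \sqcup (X \otimes Y) \;=\; X \sqcup (X \otimes Y)
\]
will then yield $X \sqsubseteq X \otimes Y$ by Definition \ref{Defi:Order}.

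The main obstacle I expect is precisely spotting that one should substitute $Y$ by $X \sqcup Y$ inside the $\otimes$ expression in the second case; once that move is found, the rest reduces to direct applications of (H1) and (H3), and the proof is essentially routine.
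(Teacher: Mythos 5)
Your proposal is correct, but it follows a different route from the paper's own proof. The paper gives a uniform one-line argument covering both operations at once: from $X \sqsubseteq Y$ and the monotonicity properties $(R1)$/$(R2)$ it gets $(X \circ X) \sqsubseteq (X \circ Y)$, and then idempotence from $(H1)$ collapses the left side to $X$. You instead split into cases and, for $\circ = \otimes$, replace the appeal to $(R2)$ by the substitution $Y = X \sqcup Y$ followed by distributivity $(H3)$ and idempotence, landing on $X \sqcup (X \otimes Y) = X \otimes Y$, which is literally Definition~\ref{Defi:Order}. Both arguments are sound within the paper's framework. What your version buys is self-containment: $(R1)$--$(R3)$ are only asserted as ``easily verified'' in the paper, never proved, and your $\otimes$-case effectively inlines the proof of the relevant instance of $(R2)$ (which would itself go through distributivity). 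What the paper's version buys is brevity and uniformity over $\circ$. Your side observation that the $\sqcup$ case needs no hypothesis is also correct and is recorded separately in the paper as Remark~\ref{rema:MenorQueOperacao}.
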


\begin{proof}
    Suppose that $X \sqsubseteq Y$ so by $(R1)$ and $(R2)$ we have that  $(X \circ X) \sqsubseteq (X \circ Y)$ but by $(H1)$,  $X \circ X = X$, therefore,  $X \sqsubseteq (X \circ Y)$.
\end{proof}

\begin{rema}\label{rema:MenorQueOperacao}
    Notice that for all $X, Y \in \mathbb{H}$, $X \sqcup (X \sqcup Y) = X \sqcup Y$. Therefore, $X \sqsubseteq (X \sqcup Y)$.
\end{rema}

\begin{teo}\label{teo:CotaSuperior}
	Let $\{X_i\}_{i \in I}$ be a finite family of elements of $\mathbb{H}$, then for any $X_j$, with $j \in I$,  $X_j \sqsubseteq \bigsqcup_{i \in I} X_i$.
\end{teo}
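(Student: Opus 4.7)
The plan is to proceed by induction on $n = \#I$, relying on the $n$-ary sup-combination defined in Equation (\ref{eq:SupCombinacaoNarria}) together with Remark \ref{rema:MenorQueOperacao} (which gives the crucial one-step absorption $X \sqsubseteq X \sqcup Y$) and the transitivity of $\sqsubseteq$ established in Proposition \ref{prop:Ordem}. Since the indexing set is finite, I may write $I = \{1, \dots, n\}$ without loss of generality, using that $\sqcup$ is commutative and associative by $(H1)$ so that the value of $\bigsqcup_{i \in I} X_i$ does not depend on the chosen enumeration.

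For the base case $n = 1$, the family has a single element $X_j$, so $\bigsqcup_{i \in I} X_i = X_j$ and the conclusion $X_j \sqsubseteq X_j$ follows from reflexivity of $\sqsubseteq$. For the inductive step, I assume the statement for families of size $n$ and consider a family $X_1, \dots, X_{n+1}$. I split into two cases according to whether $j = n+1$ or $j \leq n$. If $j \leq n$, the induction hypothesis yields $X_j \sqsubseteq \bigsqcup_{i=1}^{n} X_i$, while Remark \ref{rema:MenorQueOperacao} gives
\begin{equation*}
\bigsqcup_{i=1}^{n} X_i \;\sqsubseteq\; \Bigl(\bigsqcup_{i=1}^{n} X_i\Bigr) \sqcup X_{n+1} \;=\; \bigsqcup_{i=1}^{n+1} X_i,
\end{equation*}
and transitivity finishes the case. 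If $j = n+1$, I rewrite the sup-combination via commutativity of $\sqcup$ and apply Remark \ref{rema:MenorQueOperacao} directly to conclude $X_{n+1} \sqsubseteq \bigsqcup_{i=1}^{n+1} X_i$.

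There is no real obstacle here; the only subtle point is the justification that rearranging the indices of a finite $n$-ary sup-combination does not change its value, which is why I explicitly invoke the commutative-monoid properties listed in $(H1)$ before setting up the induction. Once this is noted, the proof reduces to a routine induction combined with transitivity.
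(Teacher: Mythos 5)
Your proof is correct and takes essentially the same route as the paper: the paper's own one-line proof invokes exactly Remark \ref{rema:MenorQueOperacao} together with commutativity and associativity from $(H1)$, and your induction on $\#I$ (with reflexivity and transitivity from Proposition \ref{prop:Ordem}) is simply the explicit unfolding of that argument. No gaps.
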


\begin{proof}
	By remark \ref{rema:MenorQueOperacao},  commutativity and associativity in $(H1)$.
\end{proof}

\begin{defi}\label{def:ImagemDominio}
	Let $\mathcal{U}$ be a nonempty set, $f: \mathcal{U} \rightarrow \mathbb{H}$  and $k \in \mathbb{H}$. Then we have the following sets:
	\begin{itemize}
		\item[(i)] $R_{f} = \{X \in \mathbb{H} \mid f(x) = X, x \in \mathcal{U}\}$. 
		\item[(ii)] $S^k_{f} = \{x \in \mathcal{U} \mid k  \sqsubseteq f(x)\}$
	\end{itemize}
\end{defi}

\begin{defi}
    A Typical Hesitant Fuzzy Language, or simply THFL, is any THFS $f: \Sigma^* \rightarrow \mathbb{H}$. The set of all Typical Hesitant Fuzzy Languages is denoted by $\mathbb{T}$.
\end{defi}

Recently, Costa and Bedregal in \cite{valdigleis2019} introduced a new generalization of fuzzy automata called Nondeterministic Typical Hesitant Fuzzy Automata.

\begin{defi}\label{defi:AFHT}
    \cite{valdigleis2019} A Nondeterministic Typical Hesitant Fuzzy Automaton, or simply NTHFA, is a quintuple $M = \langle Q, \Sigma, \psi, q_0, \mathcal{F} \rangle$ where $Q$ is a finite non-empty set of states, $\Sigma$ is an alphabet, $\psi: Q \times \Sigma \times Q \rightarrow \mathbb{H}$ is a THFS, $q_0$ is initial state and $\mathcal{F}: Q \rightarrow [0,1]$ is the THFS on $Q$ of final states.
\end{defi}

\begin{defi}
    \cite{valdigleis2019} For any NTHFA $M$ the functions $\psi$ is extended into a function $\widehat{\psi}: Q \times \Sigma^* \times Q \rightarrow \mathbb{H}$ using the recursion:
    \begin{equation}
        \widehat{\psi}(q, \lambda, q') = \left\{\begin{array}{rl}	\{0\}, & \hbox{if } q \neq q'\\	\{1\},  & \hbox{else}\end{array}\right.
    \end{equation}
    \begin{equation}\label{eq:part2}
        \widehat{\psi}(q, wa, q') = \bigsqcup_{q'' \in Q} \Big(\widehat{\psi}(q, w, q'') \otimes \psi(q'', a, q')\Big)
    \end{equation}
\end{defi}

\begin{defi}\label{defi:linguagemH}
    \cite{valdigleis2019} Let $M$ be a  NTHF, then $M$ computes the THFL $f_M : \Sigma^* \rightarrow \mathbb{H}$ is defined by,
    \begin{equation}
        f_M(w) = \bigsqcup_{q \in Q} \Big(\widehat{\psi}(q_0, w, q) \otimes \mathcal{F}(q) \Big)
    \end{equation}
\end{defi}

\begin{rema}\label{rema:ValorParaLambda}
	By definition \ref{defi:linguagemH} it's posible to deduce that, for all $M$, $f_M(\lambda) = \mathcal{F}(q_0)$.
\end{rema}

\section{A Characterization of the Languages Computed by NTHFA}

This paper will denote the set of all the THFL computed by NTHFA by $\mathbb{T}_\mathcal{R}$. Now consider the following definition.

\begin{defi}
    Let $\Sigma$ be an alphabet and let $f_1: \Sigma^* \rightarrow \mathbb{H}$ and $\mathcal{L}_2: \Sigma^* \rightarrow \mathbb{H}$ two be THFL such that $f_1, f_2 \in \mathbb{T}_\mathcal{R}$, the $\mathbb{H}$-union of $f_1$ and $f_2$, denoted by $f_1 \uplus f_2$, is given by,
	\begin{equation}\label{eq:Uniao}
		f_1 \uplus f_2(w) = f_1(w) \sqcup f_2(w)
	\end{equation}
	for all $w \in \Sigma^*$.
\end{defi}

\begin{teo}\label{teo:Uniao}
	If $f_1, f_2 \in \mathbb{T}_\mathcal{R}$, then $f_1 \uplus f_2 \in \mathbb{T}_\mathcal{R}$.
\end{teo}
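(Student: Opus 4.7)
The plan is the standard ``new-initial-state'' construction, adapted to the $\mathbb{H}$-valued setting. Given NTHFAs $M_i = \langle Q_i, \Sigma, \psi_i, q_{0,i}, \mathcal{F}_i\rangle$ with $f_{M_i} = f_i$ for $i = 1,2$ (after a harmless renaming I assume $Q_1 \cap Q_2 = \emptyset$), I build $M = \langle Q, \Sigma, \psi, q_0, \mathcal{F}\rangle$ by adding a fresh state: $Q = Q_1 \cup Q_2 \cup \{q_0\}$ with $q_0 \notin Q_1 \cup Q_2$. Inside each $Q_i$ I copy the transitions of $M_i$; between $Q_1$ and $Q_2$, and from any state back to $q_0$, I set $\psi$ to $\{0\}$; from the new initial state I mimic both old initial states by $\psi(q_0, a, q) = \psi_i(q_{0,i}, a, q)$ whenever $q \in Q_i$. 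Finally I set $\mathcal{F}(q) = \mathcal{F}_i(q)$ on $Q_i$ and $\mathcal{F}(q_0) = \mathcal{F}_1(q_{0,1}) \sqcup \mathcal{F}_2(q_{0,2})$, so that $f_M(\lambda) = \mathcal{F}(q_0) = f_1(\lambda) \sqcup f_2(\lambda)$ by Remark \ref{rema:ValorParaLambda}.

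The heart of the proof is the following claim, proved by induction on $|w|$: for every $w \in \Sigma^+$ and every $q \in Q_i$,
\[
\widehat{\psi}(q_0, w, q) = \widehat{\psi_i}(q_{0,i}, w, q),
\qquad\text{and}\qquad
\widehat{\psi}(q_0, w, q_0) = \{0\}.
\]
For $|w| = 1$ both identities fall out of the definition of $\psi$ at $q_0$ and of $\widehat{\psi_i}$ at $\lambda$. For the step, I expand $\widehat{\psi}(q_0, wa, q)$ via Equation (\ref{eq:part2}) and split the sup over $q''$ into the three blocks $\{q_0\}, Q_1, Q_2$. The $\{q_0\}$-summand dies by the second half of the inductive hypothesis together with property $(H2)$, since $\{0\} \otimes \psi(q_0,a,q) = \{0\}$; the ``wrong-side'' $Q_j$-summands (with $j \neq i$) die because $\psi(q'', a, q) = \{0\}$ there and again by $(H2)$; what remains is exactly the defining sup for $\widehat{\psi_i}(q_{0,i}, wa, q)$. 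The ``$\widehat{\psi}(q_0, wa, q_0) = \{0\}$'' half is even easier, because no transition of $M$ ever lands in $q_0$, so every summand already contains a $\{0\}$ factor.

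From the claim, the conclusion is a short computation. For $w \in \Sigma^+$ I split
\[
f_M(w) = \Bigl(\widehat{\psi}(q_0,w,q_0)\otimes\mathcal{F}(q_0)\Bigr) \sqcup \bigsqcup_{q \in Q_1}\!\bigl(\widehat{\psi}(q_0,w,q)\otimes\mathcal{F}(q)\bigr) \sqcup \bigsqcup_{q \in Q_2}\!\bigl(\widehat{\psi}(q_0,w,q)\otimes\mathcal{F}(q)\bigr),
\]
use the claim to replace the second and third blocks by $f_1(w)$ and $f_2(w)$, and use $(H2)$ to kill the first block; together with the $w = \lambda$ case already handled, this gives $f_M = f_1 \uplus f_2$, so $f_1 \uplus f_2 \in \mathbb{T}_\mathcal{R}$.

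I do not expect a deep obstacle: the construction is routine and the closure is exactly what one should hope for. The only point requiring care is the bookkeeping of the sup-combinations, in particular consistently applying the annihilator property $(H2)$ to eliminate unwanted summands and using commutativity/associativity from $(H1)$ to reorganize the big $\bigsqcup$'s; this is essentially where the proof could go wrong if one tried to skip steps.
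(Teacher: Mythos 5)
Your proposal is correct and follows essentially the same route as the paper: the same new-initial-state construction with $Q = Q_1 \cup Q_2 \cup \{q_0\}$, the same choice of $\mathcal{F}(q_0) = \mathcal{F}_1(q_{0,1}) \sqcup \mathcal{F}_2(q_{0,2})$, and the same splitting of the big $\bigsqcup$ into the $q_0$, $Q_1$, and $Q_2$ blocks with $(H2)$ killing the $q_0$ term. The only difference is that you state and prove by induction the claim $\widehat{\psi}(q_0,w,q) = \widehat{\psi_i}(q_{0,i},w,q)$ explicitly, which the paper leaves implicit.
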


\begin{proof}
    Assume that  $f:\Sigma^* \rightarrow \mathbb{H}$ and $f_2:\Sigma^* \rightarrow \mathbb{H}$ belongs to $\mathbb{T}_\mathcal{R}$, so there exists $M_1 = \langle S, \Sigma, \psi_1, s_0, \mathcal{F}_1 \rangle$ and $M_2 = \langle P, \Sigma, \psi_2, p_0, \mathcal{F}_2 \rangle$ such that $f_1 = f_{M_1}$ and $f_2 = f_{M_2}$. Moreover, without loss of generality, assume that $S \cap P = \emptyset$, now define a new NTHFA $M = \langle Q, \Sigma, \psi, q_0, \mathcal{F} \rangle$ where, 
    \begin{itemize}
		\item[(a)] $Q = S \cup P \cup \{q_0\}$ with $q_0 \notin (S \cup P)$.
		\item[(b)] For all $q \in Q$ we have,
		\begin{eqnarray}\label{eq:FinalUniao}
		\mathcal{F}(q) = \left\{\begin{array}{ll}	\mathcal{F}_1(q), & \hbox{if } q \in S\\ \mathcal{F}_2(q), & \hbox{if } q \in P\\	\mathcal{F}_1(s_0) \sqcup \mathcal{F}_2(p_0),  & \hbox{if } q = q_0\end{array}\right.\
		\end{eqnarray}
		\item[(c)] For all $q,q' \in Q$ and $a \in \Sigma$ we have,
		\begin{eqnarray}\label{eq:TransicaoUniao}
		\psi(q, a, q') = \left\{\begin{array}{ll}	\psi_1(q, a, q'), & \hbox{if } q, q' \in S\\ 	\psi_1(s_0, a, q'), & \hbox{if } q = q_0, q' \in S\\ \psi_2(q, a, q'), & \hbox{if } q,q' \in P\\	\psi_2(p_0, a, q'), & \hbox{if } q = p_0, q' \in P \\ \{0\}, & \mbox{otherwise}\end{array}\right.
		\end{eqnarray}
	\end{itemize}
	it is evident that $M$ is NTHFA. Now notice that,
	\begin{eqnarray}\label{eq:ComputacaoUniaoLambda}
		f_M(\lambda) & \stackrel{Rem.  \ref{rema:ValorParaLambda}}{=} & \mathcal{F}(q_0) \nonumber\\
		& \stackrel{Eq. (\ref{eq:FinalUniao})}{=} & \mathcal{F}_1(s_0) \sqcup \mathcal{F}_2(p_0)\\
		& \stackrel{Rem. \ref{rema:ValorParaLambda}}{=} & F_{M_1}(\lambda) \sqcup F_{M_1}(\lambda)  \nonumber\\
		& = & f_{M_1} \uplus f_{M_2} \nonumber
	\end{eqnarray}
	and for all $w \in \Sigma^+$ with $w = a_1a_2\cdots a_n$, by $(H1)$ we have that $\mathbb{H}$ is a monoid thus,
	\begin{eqnarray}\label{eq:ComputacaoUniao}
		\begin{array}{l}
			f_M(a_1a_2\cdots a_n) = \bigsqcup_{q_f \in Q}\Big(\widehat{\psi}(q_0, a_1a_2\cdots a_n, q_f) \otimes \mathcal{F}(q_f)\Big) = \\
			\bigsqcup \Big(\bigsqcup_{q_f \in Q-\{q_0\}}\Big(\widehat{\psi}(q_0, a_1a_2\cdots a_n, q_f) \otimes \mathcal{F}(q_f)\Big),  (\widehat{\psi}(q_0, a_1a_2\cdots a_n, q_0) \otimes \mathcal{F}(q_0))\Big).
		\end{array}
	\end{eqnarray}
	But by equations (\ref{eq:part2}) and (\ref{eq:TransicaoUniao}), and also by $(H2)$ it is clear that,
	\begin{eqnarray*}
		\widehat{\psi}(q_0, a_1a_2\cdots a_n, q_0) & = & \{0	\}.
	\end{eqnarray*}
	Therefore, 
	\begin{eqnarray}\label{eq:ComputacaoUniao2}
		\mathcal{L}_\mathcal{M}(a_1a_2\cdots a_n) & = & \bigsqcup \Big(\bigsqcup_{q_f \in Q-\{q_0\}}\Big(\widehat{\psi}(q_0, a_1a_2\cdots a_n, q_f) \otimes \mathcal{F}(q_f)\Big), \nonumber\\
		& & \ \ \ \ \ \ \ \ \ \ \  \big(\{0\} \otimes \mathcal{F}(q_0)\big)\Big) \nonumber \\
		& \stackrel{(H2)}{=} & \bigsqcup_{q_f \in Q-\{q_0\}}\Big(\widehat{\psi}(q_0, a_1a_2\cdots a_n, q_f) \otimes \mathcal{F}(q_f)\Big)\\
		& = & \bigsqcup \Big(\bigsqcup_{q_f \in S}\Big(\widehat{\psi}(q_0, a_1a_2\cdots a_n, q_f) \otimes \mathcal{F}(q_f)\Big), \nonumber\\
		& & \ \ \ \ \ \bigsqcup_{q_f \in P}\Big(\widehat{\psi}(q_0, a_1a_2\cdots a_n, q_f) \otimes \mathcal{F}(q_f)\Big)\Big) \nonumber\\
		& = & f_{M_1}(w) \sqcup f_{M_2}(w) \nonumber\\
		& = & f_1 \uplus f_2 (w) \nonumber.
	\end{eqnarray}
	Hence, by equations (\ref{eq:ComputacaoUniaoLambda}) and (\ref{eq:ComputacaoUniao2}),  $f_M = f_{M_1} \uplus f_{M_2}$, completing the proof.
\end{proof}

The result of the above theorem shows that the $\mathbb{H}$-union is a closure for the set $\mathbb{T}_\mathcal{R}$, and this result is generalized as follows.

\begin{col}\label{col:Uniao}
	Let $\{f_i\}_{i \in I}$ be a finite family of THFL such that $f_i \in \mathbb{T}_\mathcal{R}$ for all $i \in I$, then there exists a NTHFA $M$ such that $f_M = \biguplus_{i \in I} f_i$.
\end{col}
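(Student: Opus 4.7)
The plan is to argue by induction on the cardinality $n = \#I$, using Theorem \ref{teo:Uniao} as the key inductive step. First, I would observe that the $\mathbb{H}$-union $\uplus$ inherits associativity from $\sqcup$ (property $(H1)$), so the iterated union $\biguplus_{i \in I} f_i$ is well-defined without parenthesization, and may be rewritten in any convenient grouping.

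For the base case $n = 1$, there is nothing to prove: the family consists of a single $f_{i_0} \in \mathbb{T}_\mathcal{R}$, so some NTHFA $M$ already satisfies $f_M = f_{i_0} = \biguplus_{i \in I} f_i$. The case $n = 2$ is precisely Theorem \ref{teo:Uniao}.

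For the inductive step, assume the statement holds for every family in $\mathbb{T}_\mathcal{R}$ of size $n \geq 2$, and let $\{f_i\}_{i \in I}$ be a family with $\#I = n+1$. Fix some $j \in I$ and set $I' = I \setminus \{j\}$, so $\#I' = n$. By the inductive hypothesis, there exists a NTHFA $M'$ such that $f_{M'} = \biguplus_{i \in I'} f_i$, hence $\biguplus_{i \in I'} f_i \in \mathbb{T}_\mathcal{R}$. Since also $f_j \in \mathbb{T}_\mathcal{R}$, Theorem \ref{teo:Uniao} applied to the pair $\bigl(\biguplus_{i \in I'} f_i,\, f_j\bigr)$ yields a NTHFA $M$ with
\begin{equation*}
    f_M \;=\; \Bigl(\biguplus_{i \in I'} f_i\Bigr) \uplus f_j \;=\; \biguplus_{i \in I} f_i,
\end{equation*}
where the last equality uses associativity and commutativity of $\uplus$ inherited from $(H1)$. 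This closes the induction.

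There is no real obstacle here: the entire argument is a routine induction resting on Theorem \ref{teo:Uniao}, whose construction glues two NTHFA via a fresh initial state. The only point that deserves a quick mention is the associativity/commutativity remark, which legitimizes peeling off an arbitrary summand $f_j$ from the iterated $\mathbb{H}$-union so that the inductive hypothesis can be invoked on the remaining $n$-element subfamily.
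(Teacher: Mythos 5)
Your proof is correct and follows exactly the route the paper intends: the paper's own proof is the one-line remark ``induction on $I$ together with Theorem \ref{teo:Uniao}'', and your write-up simply fills in the routine details of that induction (base case, peeling off a summand, appealing to associativity and commutativity of $\uplus$ inherited from $(H1)$). Nothing further is needed.
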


\begin{proof}
	Using induction on $I$ and the theorem \ref{teo:Uniao}.
\end{proof}

The next result presents a characterization of the languages computed by NTHFA, i.e., the next result establishes the sufficient and necessary conditions for a THFL $f$ belongs to $\mathbb{T}_\mathcal{R}$.

\begin{teo}\label{teo:Finitude}
	Let $f:\Sigma^* \rightarrow \mathbb{H}$ be a THFL. Then the following statements are equivalent.
	\begin{itemize}
		\item[$(i)$] $f \in \mathbb{T}_\mathcal{R}$.
		\item[$(ii)$] $R_f$ is finite and for each $k \in R_{f}$ the set $S^k_f$ is a regular language.
	\end{itemize}
\end{teo}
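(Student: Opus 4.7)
The plan is to prove both implications separately, building concrete automata in each direction.

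For $(i) \Rightarrow (ii)$, suppose $f = f_M$ for some NTHFA $M = \langle Q, \Sigma, \psi, q_0, \mathcal{F} \rangle$. The finiteness of $R_f$ follows from Remark \ref{rema:elementosGerados}: the finite set $\kappa = \{\psi(q, a, q') : q, q' \in Q, a \in \Sigma\} \cup \{\mathcal{F}(q) : q \in Q\} \cup \{\{0\}, \{1\}\}$ generates only finitely many elements of $\mathbb{H}$ under $\sqcup$ and $\otimes$, and every $f(w)$ lies in this generated set by the recursive definitions of $\widehat{\psi}$ and $f_M$. For the regularity of each $S^k_f$, I would perform a subset-style construction where the ``state'' reached after reading $w$ is the function $F_w : Q \to \mathbb{H}$ defined by $F_w(q) = \widehat{\psi}(q_0, w, q)$. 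By the previous argument there are only finitely many such functions, and $F_{wa}$ is determined from $F_w$ via equation (\ref{eq:part2}). Declaring as accepting those $F$ with $k \sqsubseteq \bigsqcup_{q \in Q}(F(q) \otimes \mathcal{F}(q))$ yields a DFA recognizing $S^k_f$, so Theorem \ref{teo:regular} gives regularity.

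For $(ii) \Rightarrow (i)$, the key is to express each level set $L_k = \{w : f(w) = k\}$ as a Boolean combination of the sets $S^{k'}_f$. Specifically, using antisymmetry of $\sqsubseteq$, one checks that
\begin{equation*}
L_k = S^k_f \setminus \bigcup_{\substack{k' \in R_f \\ k \sqsubsetneq k'}} S^{k'}_f.
\end{equation*}
Since $R_f$ is finite and regular languages are closed under Boolean operations, each $L_k$ is regular, and the family $\{L_k\}_{k \in R_f}$ partitions $\Sigma^*$. Picking DFAs $D_k = \langle Q_k, \Sigma, \delta_k, q_0^k, F_k \rangle$ for each $L_k$, I would then form the product automaton with state set $Q = \prod_{k \in R_f} Q_k$, initial state $(q_0^k)_k$, and coordinate-wise deterministic transitions. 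Viewing this as an NTHFA by setting $\psi((q_k)_k, a, (q'_k)_k) = \{1\}$ when $(q'_k)_k$ is the product transition and $\{0\}$ otherwise, and defining $\mathcal{F}((q_k)_k) = k^*$ where $k^*$ is the unique level with $q_{k^*} \in F_{k^*}$ (uniqueness follows from disjointness of the $L_k$'s), one verifies by induction that $\widehat{\psi}(q_0, w, q) = \{1\}$ exactly when $q$ is the product state reached by $w$, and hence $f_M(w) = \mathcal{F}((\delta_k(q_0^k, w))_k) = f(w)$.

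The main obstacle I expect is in the forward direction: correctly setting up the DFA for $S^k_f$ requires tracking the whole vector of $\widehat{\psi}$ values across all states of $M$, not merely summary information, and the argument that this state space is finite leans crucially on Remark \ref{rema:elementosGerados}. In the reverse direction the technical subtlety is the Boolean description of $L_k$, where one must be careful to subtract only those $S^{k'}_f$ with $k \sqsubsetneq k'$ rather than all $k' \neq k$; the partial, non-linear nature of $\sqsubseteq$ on $\mathbb{H}$ is precisely what makes this phrasing necessary.
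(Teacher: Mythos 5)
Your proof is correct, but it takes a genuinely different route from the paper in both directions. For $(i)\Rightarrow(ii)$ the paper does not determinize: for each $k$ it builds a ``cut'' NFA $A_k$ directly on the state set $Q$ of $M$, with $\delta_k(q,a)\ni p$ iff $k\sqsubseteq\psi(q,a,p)$ and $q\in F_k$ iff $k\sqsubseteq\mathcal{F}(q)$, and then argues $L(A_k)=S^k_f$ by pushing $\sqsubseteq$ through the recursion for $\widehat{\psi}$. Your vector-valued subset construction (states $F_w(q)=\widehat{\psi}(q_0,w,q)$, finiteness from Remark \ref{rema:elementosGerados}) is larger but more robust: it needs no compatibility between $\sqsubseteq$ and the operations, whereas the paper's equivalence chain silently uses both ``$k\sqsubseteq X\otimes Y$ iff $k\sqsubseteq X$ and $k\sqsubseteq Y$'' and ``$k\sqsubseteq\bigsqcup_i X_i$ iff $k\sqsubseteq X_i$ for some $i$,'' only one direction of each of which is justified by the results (R2), Theorem \ref{teo:CotaSuperior} stated in Section 2. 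For $(ii)\Rightarrow(i)$ the paper avoids your level sets entirely: it builds, for each $k\in R_f$, an NTHFA $M_k$ from a DFA for $S^k_f$ that outputs $k$ on $S^k_f$ and $\{0\}$ elsewhere, and then invokes closure under $\mathbb{H}$-union (Corollary \ref{col:Uniao}), relying on $\bigsqcup\{k\in R_f: k\sqsubseteq f(w)\}=f(w)$; your Boolean extraction of $L_k=\{w: f(w)=k\}$ followed by a single product DFA is heavier but yields a crisp deterministic machine in one step (essentially anticipating Theorem \ref{teo:SemNaoDeterminismo}). The one detail you should tighten is the definition of $\mathcal{F}$ on the product: uniqueness of the accepting coordinate $k^*$ is only guaranteed on \emph{reachable} product states (where it is well defined because two words reaching the same product state lie in exactly the same sets $L_k$); on unreachable states you should simply set $\mathcal{F}=\{0\}$ or note that the value is irrelevant to $f_M$.
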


\begin{proof}
    $(i) \Rightarrow (ii)$ Suppose that $f:\Sigma^* \rightarrow \mathbb{H}$ belongs to $\mathbb{T}_\mathcal{R}$, thus there exists a NTHFA $M = \langle Q, \Sigma, \psi, q_0, \mathcal{F} \rangle$ such that $f = f_M$, i.e. for all $w \in \Sigma^*$ we have, 
    $$f(w) = \bigsqcup_{q_n \in Q} \Big(\widehat{\psi}(q_0, w, q_n) \otimes \mathcal{F}(q_n)\Big).$$
    Nevertheless, by definition of $\mathcal{M}$ we have that, $Q\times \Sigma \times Q$ is finite. Therefore, $R_{\psi}$ and $R_\mathcal{F}$ are finite sets. But by Remark \ref{rema:elementosGerados} we have that $R_{\widehat{\psi}}$ is finite. Hence, the set $R_f$ is finite. Now for each $k \in R_f$ we define an NFA $A_k = \langle Q, \Sigma, q_0, \delta_{k}, F_k \rangle$ where:
    \begin{eqnarray}\label{eq:Use1}
	    \delta_k(q, a) = p &\Longleftrightarrow& k \sqsubseteq  \psi(q, a, p)
	\end{eqnarray}
	and
	\begin{eqnarray}\label{eq:Use2}
	    q \in F_k  &\Longleftrightarrow& k \sqsubseteq  \mathcal{F}(q).
	\end{eqnarray}
	with $q, p \in Q$ and $a \in \Sigma$. Now we have that for any $w \in \Sigma^*$,
	\begin{eqnarray*}
		w \in L(A_k) & \Longleftrightarrow & \widehat{\delta_k}(q_0, w)  \in F_k\\
		& \Longleftrightarrow & \exists q_1, \cdots, q_{n-1}, q_n \in Q \mbox{, such that } q_n \in F_k \mbox{ and }\\
		& & \delta_k(q_0, a_1) = q_1, \cdots, \delta_k(q_{n-1}, a_n) = q_n\\
		& \stackrel{Eq. (\ref{eq:Use1}), (\ref{eq:Use2})}{\Longleftrightarrow }& \exists q_1, \cdots, q_{n-1}, q_n \in Q \mbox{, such that } k \sqsubseteq  \mathcal{F}(q_n)  \mbox{ and }\\
		& & k \sqsubseteq \psi(q_0, a_1, q_1), \cdots, k \sqsubseteq \psi(q_{n-1}, a_n, q_n)\\
		& \Longleftrightarrow & \exists q_n \in Q \mbox{, such that } k \sqsubseteq  \mathcal{F}(q_n) \mbox{ and } k \sqsubseteq \widehat{\psi}(q_0, w, q_n)\\
		& \stackrel{(R2)}{\Longleftrightarrow} & \exists q_n \in Q \mbox{, such that } k \sqsubseteq  \Big(\mathcal{F}(q_n) \otimes \widehat{\psi}(q_0, w, q_n)\Big)\\
		& \stackrel{Theo. \ref{teo:CotaSuperior}}{\Longleftrightarrow} & k \sqsubseteq  \bigsqcup_{q_n \in Q} \Big(\widehat{\psi}(q_0, w, q_n) \otimes \mathcal{F}(q_n)\Big)\\
		& \Longleftrightarrow & k \sqsubseteq f(w)\\
		& \Longleftrightarrow & w \in S^k_f
	\end{eqnarray*}
	Hence, $L(A_k) = S^k_\mathcal{L}$. Since $A_k$ is a NFA, by Theorem \ref{teo:regular2} we have that $ S^k_\mathcal{L}$ is a regular language.
	
	$(ii) \Rightarrow (i)$ Assume that $R_f = \{k_1, \cdots, k_n\}$ for some $n \in \mathbb{Z}_+^*$ and that for each $k \in R_f$ the set $S^{k}_\mathcal{L}$ is a regular language. Hence, by Theorem \ref{teo:regular} there exists a finite family of DFA $\{A_k\}_{k \in R_f}$, where $A_k = \langle Q_k, \Sigma, \delta_k, q^k_0, F_k \rangle$ such that $L(A_k) = S^k_f$. Now for each $k \in R_f$ define an NTHFA $M_k = \langle Q_k, \Sigma, \psi_k, q^k_0, \mathcal{F}_k \rangle$ such that for each $q,q' \in Q_k$ and $a \in \Sigma$:
	\begin{eqnarray}\label{eq:TransicaoK}
	    \psi_k(q, a, q') = \left\{\begin{array}{ll} \{1\}, & \mbox{if } \delta_k(q, a) = q'\\ \{0\}, & \mbox{else } \end{array}\right.
	\end{eqnarray}
	and 
	\begin{eqnarray}\label{eq:FinalK}
	    \mathcal{F}_k(q) = \left\{\begin{array}{ll} k, & \mbox{if } q \in F_k \\ \{0\}, & \mbox{else } \end{array}\right.
	\end{eqnarray}
	By the construction above we have that  for all $w \in \Sigma^*$,
	\begin{eqnarray*}
		w \in L(A_k) & \Rightarrow & f_{M_k}(w) = k
	\end{eqnarray*}
	and  
	\begin{eqnarray*}
		w \notin L(A_k) \Rightarrow f_{M_k}(w) = \{0\}
	\end{eqnarray*}
	so by definition $f_{\mathcal{M}_k} \in \mathbb{T}_\mathcal{R}$. Since $R_f$ is finite there exists a finite family $\{f_{M_k}\}_{k \in R_f}$ and, by corollary \ref{col:Uniao}, there exists an NHTFA $M$ such that,
	\begin{eqnarray*}
		f_M = \biguplus_{k \in R_f} f_{M_k}.
	\end{eqnarray*}
	Moreover, it is clear that $f_M = f$, therefore, $f \in \mathbb{T}_\mathcal{R}$.
\end{proof}

But by Theorem above, it is possible to conclude the following result.

\begin{col}\label{col:LimitacaoH-Regular}
	There exists THFL $f:\Sigma^* \rightarrow \mathbb{H}$ such that $f \notin \mathbb{T}_\mathcal{R}$.
\end{col}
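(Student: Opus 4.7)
The plan is to apply Theorem \ref{teo:Finitude} in the contrapositive: to exhibit some $f \notin \mathbb{T}_\mathcal{R}$ it suffices to produce a THFL that either has an infinite range $R_f$, or has some $k \in R_f$ for which $S^k_f$ fails to be a regular language. Either route yields a very short counterexample, so the only real work is to display one.

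The more direct approach is to violate the finiteness of $R_f$. I would fix $\Sigma = \{a\}$ and define $f:\Sigma^* \to \mathbb{H}$ by $f(a^n) = \{1/(n+1)\}$ for every $n \geq 0$. This is a legitimate THFL since each $\{1/(n+1)\}$ is a finite non-empty subset of $[0,1]$, hence an element of $\mathbb{H}$; and the singletons $\{1/(n+1)\}$ are pairwise distinct, so $R_f = \{\{1/(n+1)\} : n \geq 0\}$ is infinite. By Theorem \ref{teo:Finitude}, this is already enough to conclude $f \notin \mathbb{T}_\mathcal{R}$.

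As an alternative that highlights the second failure mode, one can take any classical non-regular language $L \subseteq \Sigma^*$ (for instance $L = \{a^n b^n : n \geq 0\}$ over $\Sigma = \{a,b\}$, whose non-regularity is standard from the pumping lemma) and set $f(w) = \{1\}$ if $w \in L$ and $f(w) = \{0\}$ otherwise. Now $R_f = \{\{0\},\{1\}\}$ is finite, but a direct check using Definition \ref{Defi:Order} shows $\{1\} \sqsubseteq X$ forces $X = \{1\}$ for $X \in R_f$, hence $S^{\{1\}}_f = L$, which is not regular. Theorem \ref{teo:Finitude} again gives $f \notin \mathbb{T}_\mathcal{R}$. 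There is no real obstacle here: the entire content of the corollary resides in the characterization theorem, and the proof only has to exhibit that its two necessary conditions can genuinely fail.
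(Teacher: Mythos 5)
Your proposal is correct and its primary argument is essentially the paper's own: the paper likewise exhibits a THFL with infinite range (namely $f(w) = \bigcup_{i=0}^{|w|}\{1/(2^i+1)\}$) and invokes Theorem \ref{teo:Finitude}; your $f(a^n)=\{1/(n+1)\}$ is just a different explicit witness of the same failure mode. Your second counterexample, via a non-regular level set $S^{\{1\}}_f$, is a valid bonus not present in the paper but not needed for the corollary.
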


\begin{proof}
	The THFL $f:\Sigma^* \rightarrow \mathbb{H}$, defined by:
	\begin{eqnarray*}
		f(w) =  \bigcup_{i = 0}^{|w|} \Big\{\frac{1}{(2^i + 1)}\Big\}
	\end{eqnarray*}
	is such that $R_f$ is infinite. Hence,  by Theorem  \ref{teo:Finitude} we have that $f \notin \mathbb{T}_\mathcal{R}$.
\end{proof}

The characterization present by Theorem  \ref{teo:Finitude} and the Corollary \ref{col:LimitacaoH-Regular} induce the inclusion described in figure \ref{Ima:hierarquiaH}.

\begin{figure}[h]
	\centering
	\begin{tikzpicture}
	\begin{scope}[shift={(3cm,-5cm)}, fill opacity=0.9]
	\draw[fill=aliceblue, draw = black] (0,0) circle (2.5);
	\draw[fill=anti-flashwhite, draw = black] (0,0) circle (1.3);
	\node at (0,1.8) (A) {$\mathbb{T}$};
	\node at (0,0.1) (D) {$\mathbb{T}_\mathcal{R}$};
	\end{scope}
	\end{tikzpicture}
	\caption{Inclusion between $\mathbb{T}$ and $\mathbb{T}_\mathcal{R}$.}
	\label{Ima:hierarquiaH}
\end{figure}
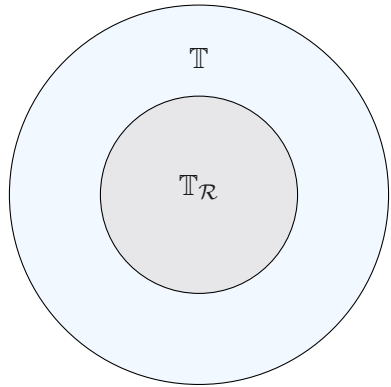

\begin{teo}\label{teo:CardinalidadeTHFL}
	The set $\mathbb{T}_\mathcal{R}$ is nondenumerable.
\end{teo}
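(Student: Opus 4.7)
The plan is to exhibit an uncountable family of distinct THFLs lying in $\mathbb{T}_\mathcal{R}$. The natural source of uncountability is $\mathbb{H}$ itself: the map $x \mapsto \{x\}$ sends $[0,1]$ injectively into $\mathbb{H}$, so $|\mathbb{H}| \geq |[0,1]|$ and therefore $\mathbb{H}$ is nondenumerable.

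For each $k \in \mathbb{H}$, I would consider the constant THFL $f_k : \Sigma^* \to \mathbb{H}$ defined by $f_k(w) = k$ for every $w \in \Sigma^*$. To place $f_k$ in $\mathbb{T}_\mathcal{R}$ I would appeal to Theorem \ref{teo:Finitude}: the image set $R_{f_k} = \{k\}$ is finite, and the unique level set $S^k_{f_k} = \{w \in \Sigma^* \mid k \sqsubseteq k\} = \Sigma^*$ is trivially a regular language. Equivalently, one can directly build the single-state NTHFA $M_k = \langle \{q_0\}, \Sigma, \psi_k, q_0, \mathcal{F}_k \rangle$ with $\psi_k(q_0,a,q_0) = \{1\}$ for all $a \in \Sigma$ and $\mathcal{F}_k(q_0) = k$, and verify by induction on word length, using $\{1\} \otimes k = k$, that $f_{M_k} = f_k$.

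Since distinct $k, k' \in \mathbb{H}$ yield $f_k(\lambda) = k \neq k' = f_{k'}(\lambda)$, the assignment $k \mapsto f_k$ is injective, whence $|\mathbb{T}_\mathcal{R}| \geq |\mathbb{H}|$ and the claim follows. No real obstacle arises in this argument; the conceptual point worth stressing is the contrast with Remark \ref{rema:NumeroDeLinguagens}, which asserts that there are only countably many regular crisp languages: the jump to the continuum comes entirely from the freedom to pick the final value $\mathcal{F}(q_0)$ from the continuum-sized reservoir $\mathbb{H}$, even though the underlying automaton carries only finite combinatorial data.
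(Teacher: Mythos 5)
Your proof is correct and follows essentially the same route as the paper: both arguments exhibit an uncountable family of constant THFLs realized by trivially small automata (the paper uses a two-state NTHFA $\mathcal{M}_x$ with all values $\{x\}$ for $x\in[0,1]$, you use a one-state machine with final value $k$ ranging over all of $\mathbb{H}$) and conclude by injecting a continuum into $\mathbb{T}_\mathcal{R}$. The only cosmetic differences are that you index by $k\in\mathbb{H}$ rather than by degenerate elements $\{x\}$ and that you additionally offer the shortcut through Theorem~\ref{teo:Finitude}, neither of which changes the substance of the argument.
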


\begin{proof}
    For each $x \in [0,1]$ define a THFA $\mathcal{M}_x = \langle \{q^x_0, q^x_1\}, \Sigma, \psi, q^x_0, \mathcal{F}_x \rangle$ such that for all $q,q' \in \{q^x_0, q^x_1\}$ and $a \in \Sigma$ we have: 
    \begin{equation*}
        \psi(q, a, q') = \{x\}
    \end{equation*}
    and
    \begin{equation*}
        \mathcal{F}_x(q) = \{x\}
    \end{equation*}
    now for each $w \in \Sigma^*$ it is clear that $f_{\mathcal{M}_x}(w) = \{x\}$. Now define the set $\Theta = \{f_{\mathcal{M}_x} \mid x \in [0,1] \}$,  clearly $\Theta \subset \mathbb{T}_\mathcal{R}$, moreover, there exists a bijection from $\Theta$ into $[0,1]$, so $\Theta$ is nondenumerable. Therefore, $\mathbb{T}_\mathcal{R}$ is nondenumerable.
\end{proof}

\section{Crisp Typical Hesitant Fuzzy Automata}

In this section, the paper is showing that the existence of a THFS of transitions in the definition of NTHFA is not essential to compute a THFL. For this, we will introduce below a new class of typical hesitant fuzzy automata.

\begin{defi}\label{defi:AFHTC}
	A Crisp Nondeterministic Typical Hesitant Fuzzy Automaton, or simply CNTHFA, is a quintuple $\mathcal{N} = \langle Q, \Sigma, \delta_\mathcal{N}, q_0, \mathcal{F} \rangle$, where $Q, \Sigma, q_o, \mathcal{F}$ is equal to definition \ref{defi:AFHT} and $\delta_\mathcal{N}: Q \times \Sigma \rightarrow 2^{Q}$ is equal to definition \ref{defi:AFN}.
\end{defi}

\begin{defi}
    A CNTHFA $M$ computes the THFL $f_M:\Sigma^* \rightarrow \mathbb{H}$ define by,
    \begin{equation}
        f_M(w) = \bigsqcup_{q \in \widehat{\delta_\mathcal{N}}(q_0, w)} \mathcal{F}(q)
    \end{equation}
\end{defi}

First, the research show that any THFL that an CNTHFA computes, also is computed by a NTHFA.

\begin{teo}\label{teo:LPseudoIsHregular}
	Let $\mathcal{N}$ be an CNTHFA, then there exists an NTHFA $\mathcal{M}$ such that $f_{\mathcal{N}} = f_\mathcal{M}$.
\end{teo}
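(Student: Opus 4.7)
The plan is to construct an equivalent NTHFA $\mathcal{M}$ by keeping every component of $\mathcal{N}=\langle Q,\Sigma,\delta_\mathcal{N},q_0,\mathcal{F}\rangle$ intact and turning the crisp nondeterministic transitions into a $\{0\}/\{1\}$-valued typical hesitant transition map. Concretely, I would set $\mathcal{M}=\langle Q,\Sigma,\psi,q_0,\mathcal{F}\rangle$ with
\[
\psi(q,a,q')=\begin{cases}\{1\},& \text{if } q'\in\delta_\mathcal{N}(q,a),\\ \{0\},& \text{otherwise},\end{cases}
\]
so that $\psi$ is a THFS on $Q\times\Sigma\times Q$ and $\mathcal{M}$ is a legitimate NTHFA with the same state space, initial state, and fuzzy final-state map as $\mathcal{N}$.

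Next I would prove, by induction on $|w|$, the key identity
\[
\widehat{\psi}(q_0,w,q)=\begin{cases}\{1\},& \text{if } q\in\widehat{\delta_\mathcal{N}}(q_0,w),\\ \{0\},& \text{otherwise}.\end{cases}
\]
The base case $w=\lambda$ is immediate from the definitions of $\widehat{\psi}$ and $\widehat{\delta_\mathcal{N}}$. For the step $w=ua$, the recursion unfolds $\widehat{\psi}(q_0,ua,q)$ as a sup-combination, ranging over $q''\in Q$, of $\widehat{\psi}(q_0,u,q'')\otimes\psi(q'',a,q)$. By the inductive hypothesis and the definition of $\psi$, each inner term equals $\{1\}$ exactly when $q''\in\widehat{\delta_\mathcal{N}}(q_0,u)$ and $q\in\delta_\mathcal{N}(q'',a)$, and equals $\{0\}$ otherwise; here I would appeal to $(H1)$ and $(H2)$, which say that $\{1\}$ is the identity and $\{0\}$ the annihilator of $\otimes$. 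Since $\{0\}$ is the identity and $\{1\}$ the annihilator of $\sqcup$ by the same properties, the sup-combination collapses to $\{1\}$ precisely when a witness $q''$ exists, that is, precisely when $q\in\widehat{\delta_\mathcal{N}}(q_0,ua)$, which closes the induction.

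With this identity in hand, I would compute
\[
f_\mathcal{M}(w)=\bigsqcup_{q\in Q}\bigl(\widehat{\psi}(q_0,w,q)\otimes\mathcal{F}(q)\bigr),
\]
observe that summands with $q\notin\widehat{\delta_\mathcal{N}}(q_0,w)$ reduce to $\{0\}\otimes\mathcal{F}(q)=\{0\}$ and are absorbed by $\sqcup$, and conclude $f_\mathcal{M}(w)=\bigsqcup_{q\in\widehat{\delta_\mathcal{N}}(q_0,w)}\mathcal{F}(q)=f_\mathcal{N}(w)$ for every $w\in\Sigma^*$. The only real obstacle is bookkeeping: keeping track of which degenerate element is the identity or annihilator under each of $\otimes$ and $\sqcup$, and verifying that $\{0\}$-valued transitions never spuriously contribute to a sup-combination. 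Because the construction confines $\psi$ to the degenerate elements $\{0\}$ and $\{1\}$, whose behavior under $\otimes$ and $\sqcup$ mirrors Boolean conjunction and disjunction, no genuinely hesitant-fuzzy subtlety is expected to arise.
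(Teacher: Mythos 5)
Your construction is exactly the one the paper uses: the same $\{0\}/\{1\}$-valued $\psi$ over the unchanged state set, initial state, and final-state map, followed by the observation that $f_\mathcal{N}=f_\mathcal{M}$. The paper simply asserts this last equality is clear, whereas you supply the induction on $|w|$ establishing $\widehat{\psi}(q_0,w,q)=\{1\}$ iff $q\in\widehat{\delta_\mathcal{N}}(q_0,w)$; your argument is correct and fills in that omitted detail.
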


\begin{proof}
	Given a CNTHFA $\mathcal{N} = \langle Q, \Sigma, \delta_\mathcal{N}, q_0, \mathcal{F} \rangle$, define a new NTHFA $\mathcal{M} = \langle Q, \Sigma, \psi, q_0, \mathcal{F} \rangle$ where:
	\begin{eqnarray}
	    \psi(q, a, q') = \left\{\begin{array}{ll} \{1\}, & \mbox{if } q' \in        \delta_\mathcal{N}(q,a)\\ \{0\}, & \mbox{else } \end{array}\right.
	\end{eqnarray}
	for all $q, q' \in Q$ and $a \in \Sigma$. Now For any $w \in \Sigma^*$ it is clear that $f_\mathcal{N}(w) = f_\mathcal{M}(w)$. Hence, $f_\mathcal{N} = f_\mathcal{M}$.
\end{proof}

On the other hand, the computational power of NTHFA is equivalent to the power of CNTHFA.

\begin{teo}\label{teo:HregularCautomaton}
	Let $\mathcal{M}$ be an NTHFA,  then there exist a CNTHFA $\mathcal{N}$ such that $\mathcal{N}$ has one more state than $\mathcal{M}$ and $f_\mathcal{M} = f_\mathcal{N}$.
\end{teo}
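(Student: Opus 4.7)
The plan is to construct $\mathcal{N}=\langle Q\cup\{q^{\star}\},\Sigma,\delta_\mathcal{N},q^{\star},\mathcal{F}'\rangle$, where $q^{\star}\notin Q$ is a single new state serving as the initial state; this yields exactly one more state than $\mathcal{M}$. The idea is to mimic $\mathcal{M}$ at the level of crisp reachability while routing the information carried by the weights of $\psi$ through the final-state function $\mathcal{F}'$.

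First I would handle the empty word. Setting $\mathcal{F}'(q^{\star}):=\mathcal{F}(q_{0})$ and using the base case $\widehat{\delta_\mathcal{N}}(q^{\star},\lambda)=\{q^{\star}\}$ gives, by Remark~\ref{rema:ValorParaLambda}, that
$f_\mathcal{N}(\lambda)=\mathcal{F}'(q^{\star})=\mathcal{F}(q_{0})=f_\mathcal{M}(\lambda)$.
Next, I would define the crisp transitions so that a state is reachable in $\mathcal{N}$ exactly when $\mathcal{M}$ admits a path to it whose transition weights are all non-$\{0\}$: for $a\in\Sigma$,
$\delta_\mathcal{N}(q^{\star},a):=\{p\in Q:\psi(q_{0},a,p)\neq\{0\}\}$ and $\delta_\mathcal{N}(q,a):=\{p\in Q:\psi(q,a,p)\neq\{0\}\}$ for $q\in Q$, with $\mathcal{F}'(q):=\mathcal{F}(q)$ for $q\in Q$.

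Third, by induction on $|w|$ using the recursion for $\widehat{\psi}$ and properties $(H1)$--$(H3)$, I would establish that for every $w\in\Sigma^{+}$,
$\widehat{\delta_\mathcal{N}}(q^{\star},w)=\{q\in Q:\widehat{\psi}(q_{0},w,q)\neq\{0\}\}$,
and then conclude $f_\mathcal{N}(w)=f_\mathcal{M}(w)$ by combining the absorption $\{0\}\otimes X=\{0\}$ with the fact that $\{0\}$ is the identity of $\sqcup$. The base step $|w|=1$ is immediate from the definition of $\delta_\mathcal{N}(q^{\star},a)$, and the inductive step follows the same recursion as that of $\widehat{\psi}$, with inf-combination replaced by set-union.

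The main obstacle is the last equality in the third step: since $\delta_\mathcal{N}$ only records the \emph{support} of $\psi$, the conversion from the weighted expression $\bigsqcup_{q\in Q}\bigl(\widehat{\psi}(q_{0},w,q)\otimes\mathcal{F}(q)\bigr)$ into the unweighted expression $\bigsqcup_{q\in\widehat{\delta_\mathcal{N}}(q^{\star},w)}\mathcal{F}(q)$ is the delicate point. I expect the hardest part of the proof to be verifying, using properties $(R1)$--$(R3)$, Proposition~\ref{prop:AbsorcaoDireita}, and Theorem~\ref{teo:CotaSuperior}, that every $q$ contributing a non-trivial value on the right-hand side is precisely a $q$ contributing $\mathcal{F}(q)$ on the left-hand side — and that the "lost" weights $\widehat{\psi}(q_{0},w,q)\neq\{0\}$ do not alter the sup-combination once combined with $\mathcal{F}(q)$. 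If this identity fails for some $\mathcal{M}$, then the construction must be refined by redefining $\mathcal{F}'(q)$ to pre-absorb incoming transition weights, while still using only the single extra state $q^{\star}$.
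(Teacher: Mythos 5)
There is a genuine gap, and it sits exactly at the point you flag as ``the main obstacle.'' Your construction records only the \emph{support} of $\psi$, so the identity
$\bigsqcup_{q\in\widehat{\delta_\mathcal{N}}(q^{\star},w)}\mathcal{F}(q)=\bigsqcup_{q\in Q}\bigl(\widehat{\psi}(q_{0},w,q)\otimes\mathcal{F}(q)\bigr)$
simply fails whenever some transition weight is neither $\{0\}$ nor $\{1\}$: take $Q=\{q_0,q_1\}$ with $\psi(q_0,a,q_1)=\{0.5\}$, all other values $\{0\}$, and $\mathcal{F}(q_1)=\{1\}$; then $f_\mathcal{M}(a)=\{0.5\}\otimes\{1\}=\{0.5\}$ while your $\mathcal{N}$ outputs $\mathcal{F}(q_1)=\{1\}$. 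Your proposed repair --- redefining $\mathcal{F}'(q)$ to ``pre-absorb incoming transition weights'' --- cannot work either, because the accumulated weight $\widehat{\psi}(q_{0},w,q)$ depends on the whole word $w$, not only on the state $q$ reached; two words arriving at the same state can carry different weights, and a state-indexed function $\mathcal{F}'$ cannot distinguish them.

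The missing idea is a normalization step that the paper performs before constructing anything: by the proof of $(ii)\Rightarrow(i)$ in Theorem~\ref{teo:Finitude} (together with the union construction of Theorem~\ref{teo:Uniao}, which preserves this shape), one may assume without loss of generality that $\mathcal{M}$ has $\psi(q,a,q')\in\{\{0\},\{1\}\}$ for all $q,q',a$, with all the quantitative information pushed into $\mathcal{F}$. Only for such a crisp-valued $\mathcal{M}$ does the support-based simulation become lossless, since then $\widehat{\psi}(q_0,w,q)\otimes\mathcal{F}(q)$ is either $\{0\}$ or $\mathcal{F}(q)$ and the weights genuinely disappear under $\sqcup$ with identity $\{0\}$. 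The paper's single extra state is also used differently from yours: it is a sink $q_{\aleph}$ with $\mathcal{F}_\mathcal{N}(q_{\aleph})=\{0\}$, absorbing the $\{0\}$-weighted transitions (your extra state is a fresh initial state, which is not needed since $q_0$ can remain initial). Without the normalization, your argument proves a strictly weaker statement, and the counterexample above shows it cannot be completed as written.
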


\begin{proof}
    Without loss of generality, by the proof of item  $(ii)$ in Theorem \ref{teo:Finitude} assume that $\mathcal{M} = \langle Q, \Sigma, \psi, q_0, \mathcal{F} \rangle$ is a NTHFA with the restriction $\psi(q,a,q') = \{1\}$ or $\psi(q,a,q') = \{0\}$ for all $q,q' \in Q$ and $a \in \Sigma$. Now define the following  CNTHFA  $\mathcal{N} = \langle Q \cup \{q_{\aleph}\}, \Sigma, \delta_\mathcal{N}, q_0, \mathcal{F}_\mathcal{N} \rangle$ where $q_{\aleph} \notin Q$, therefore, $\mathcal{N}$ has one more state than $\mathcal{M}$, moreover,  $\delta_\mathcal{N}$ is defined by the rules:
	\begin{itemize}
		\item[$(r1)$] If $\psi(q, a, q') = \{1\}$, then $q' \in \delta_\mathcal{N}(q, a)$ for any $q, q' \in Q$ and $a \in \Sigma$.
		\item[$(r2)$] If $\psi(q, a, q') = \{0\}$, then $q_{\aleph} \in \delta_\mathcal{N}(q, a)$ for any $q, q' \in Q$ and $a \in \Sigma$.
		\item[$(r3)$] $\delta_\mathcal{N}(q_{\aleph}, a) = \{q_{\aleph} \}$ for all $a \in \Sigma$.
	\end{itemize}
	Finally define $\mathcal{F}_\mathcal{N}$ as being:
	\begin{eqnarray}
	\mathcal{F}_\mathcal{N}(q) = \left\{\begin{array}{ll} \mathcal{F}(q), & \mbox{if } q \in Q\\ \{0\}, & \mbox{else } \end{array}\right.
	\end{eqnarray}
	By this construction, it is easy to see that $f_\mathcal{M}(w) = f_\mathcal{N}(w)$ for all $w \in \Sigma^*$, completing the proof.
\end{proof}

The above theorem shows that CNTHFA needs an extra state to compute the same language as an NTHFA. These results together present a new way to characterize the set $\mathbb{T}_\mathcal{R}$.

\begin{col}\label{col:C-LinguagemEregular}
	Let $f: \Sigma^* \rightarrow \mathbb{H}$ be a THFL. We have that $f \in \mathbb{T}_\mathcal{R}$ if and only if there exists CNTHFA $\mathcal{N}$ such that $f = f_\mathcal{N}$.	
\end{col}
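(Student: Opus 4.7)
The plan is to obtain this corollary as an immediate consequence of Theorems \ref{teo:LPseudoIsHregular} and \ref{teo:HregularCautomaton}, since together they establish that the class of THFL computed by NTHFA coincides with the class computed by CNTHFA. No new construction is needed; the only task is to chain the two implications correctly against the definition of $\mathbb{T}_\mathcal{R}$.

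For the forward direction, I would start from the hypothesis $f \in \mathbb{T}_\mathcal{R}$. By the definition of $\mathbb{T}_\mathcal{R}$ given at the opening of Section 3, this means there exists an NTHFA $\mathcal{M}$ with $f = f_\mathcal{M}$. Applying Theorem \ref{teo:HregularCautomaton} to $\mathcal{M}$ then yields a CNTHFA $\mathcal{N}$ (with exactly one more state than $\mathcal{M}$, though the state count is irrelevant here) such that $f_\mathcal{M} = f_\mathcal{N}$. Combining these equalities gives $f = f_\mathcal{N}$, as required.

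For the reverse direction, suppose there exists a CNTHFA $\mathcal{N}$ with $f = f_\mathcal{N}$. Invoking Theorem \ref{teo:LPseudoIsHregular} on $\mathcal{N}$ produces an NTHFA $\mathcal{M}$ with $f_\mathcal{N} = f_\mathcal{M}$, whence $f = f_\mathcal{M}$. Since $\mathcal{M}$ is an NTHFA computing $f$, the definition of $\mathbb{T}_\mathcal{R}$ gives $f \in \mathbb{T}_\mathcal{R}$.

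There is essentially no obstacle to overcome: this is a pure bookkeeping step. The only thing to be careful about is citing the correct direction of each theorem (Theorem \ref{teo:LPseudoIsHregular} converts CNTHFA into NTHFA, Theorem \ref{teo:HregularCautomaton} converts NTHFA into CNTHFA) and making sure the equivalence is expressed cleanly rather than restating their proofs. A two-line proof referencing both theorems should suffice.
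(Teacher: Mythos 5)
Your proposal is correct and follows exactly the route the paper takes: the paper's proof is simply ``Straightforward by Theorems \ref{teo:LPseudoIsHregular} and \ref{teo:HregularCautomaster}'' --- more precisely, Theorems \ref{teo:LPseudoIsHregular} and \ref{teo:HregularCautomaton} --- and you have merely unpacked that one-liner, correctly assigning Theorem \ref{teo:HregularCautomaton} to the forward direction and Theorem \ref{teo:LPseudoIsHregular} to the converse. No gap; nothing further is needed.
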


\begin{proof}
	Straightforward  by theorems \ref{teo:LPseudoIsHregular} and \ref{teo:HregularCautomaton}.
\end{proof}

\begin{defi}\label{defi:AFHTDC}
	A Crisp Deterministic Typical Hesitant Fuzzy Automaton, or simply CDTHFA, is a quintuple $D = \langle Q, \Sigma, \delta, q_0, \mathcal{F} \rangle$, where $Q, \Sigma, q_o, \mathcal{F}$ is equal to definition \ref{defi:AFHT} and $\delta: Q \times \Sigma \rightarrow Q$ is equal to definition \ref{defi:AFD}. Let $D$ be a CDTHFA, the THFL computed by $D$ is exactly the THFL $f_D: \Sigma^* \rightarrow \mathbb{H}$ define as:
    \begin{equation}
        f_D(w) = \mathcal{F}(\widehat{\delta}(q_0, w))
    \end{equation}
\end{defi}

\begin{teo}\label{teo:Deter}
    For all CDTHFA $D$ there exists a CNTHFA $N$ such that $f_D = f_N$.
\end{teo}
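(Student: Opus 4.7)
The plan is to mimic the classical DFA-to-NFA embedding: keep the state set, initial state, and final-state THFS unchanged, and turn the deterministic crisp transition $\delta(q,a)=p$ into the singleton-valued nondeterministic transition $\delta_{\mathcal{N}}(q,a)=\{p\}$. Concretely, given $D=\langle Q,\Sigma,\delta,q_0,\mathcal{F}\rangle$, I would define
\begin{equation*}
    N=\langle Q,\Sigma,\delta_{\mathcal{N}},q_0,\mathcal{F}\rangle, \qquad \delta_{\mathcal{N}}(q,a)=\{\delta(q,a)\}\ \text{for all }q\in Q,\ a\in\Sigma,
\end{equation*}
which is a legitimate CNTHFA by Definition \ref{defi:AFHTC}.

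Next I would check, by straightforward induction on the length of $w\in\Sigma^{*}$, that $\widehat{\delta_{\mathcal{N}}}(q_0,w)=\{\widehat{\delta}(q_0,w)\}$. The base case $w=\lambda$ is immediate from the recursive definitions of $\widehat{\delta}$ and $\widehat{\delta_{\mathcal{N}}}$. For the inductive step with $w=ua$, the induction hypothesis gives $\widehat{\delta_{\mathcal{N}}}(q_0,u)=\{\widehat{\delta}(q_0,u)\}$, so the extended nondeterministic transition reduces to a single branch whose unique successor is $\delta(\widehat{\delta}(q_0,u),a)=\widehat{\delta}(q_0,ua)$.

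Once this singleton property is established, evaluating the THFL computed by $N$ gives
\begin{equation*}
    f_N(w)=\bigsqcup_{q\in\widehat{\delta_{\mathcal{N}}}(q_0,w)}\mathcal{F}(q)=\bigsqcup_{q\in\{\widehat{\delta}(q_0,w)\}}\mathcal{F}(q)=\mathcal{F}(\widehat{\delta}(q_0,w))=f_D(w),
\end{equation*}
where the collapse of $\bigsqcup$ on a singleton is justified by viewing $\sqcup$ as the binary operation of the commutative monoid $\langle\mathbb{H},\sqcup,\{0\}\rangle$ from property $(H1)$, so the sup-combination over one element is just that element itself. Since this holds for every $w\in\Sigma^{*}$, we conclude $f_D=f_N$.

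I do not anticipate a real obstacle here: the construction is the obvious embedding of a deterministic transition as a singleton nondeterministic transition, and the only technical item is the routine inductive verification that $\widehat{\delta_{\mathcal{N}}}(q_0,w)$ remains a singleton for every word. Everything else follows directly from the definitions of $f_N$ in the CNTHFA setting and of $f_D$ in the CDTHFA setting, together with the monoidal behavior of $\sqcup$.
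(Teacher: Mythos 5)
Your proposal is correct and follows exactly the route the paper takes: the paper's proof simply observes that a CDTHFA is a special instance of a CNTHFA with $\#\delta_{\mathcal{N}}(q,a)\leq 1$, which is precisely your singleton embedding. Your version merely spells out the routine induction that the paper dismisses as obvious.
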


\begin{proof}
    Is obvious, since CDTHFA can be seen as a special instance of CNTHFA with $\# \delta(q, a) \leq 1$ for all $(q, a) \in Q \times \Sigma$.
\end{proof}

\begin{teo}\label{teo:SemNaoDeterminismo}
    if $f \in \mathbb{T}_\mathcal{R}$, then there exists a CDTHFA $D$ such that $f = f_D$.
\end{teo}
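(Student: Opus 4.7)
The plan is to apply Theorem \ref{teo:Finitude}$(i)\Rightarrow(ii)$ and convert the resulting finite family of classical DFAs into a single CDTHFA by a standard product construction, reading the THFE off each final state via a sup-combination that encodes which components accept.

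Concretely, since $f \in \mathbb{T}_\mathcal{R}$, the set $R_f = \{k_1, \ldots, k_n\}$ is finite and each $S^{k_i}_f$ is regular. Theorem \ref{teo:regular} then supplies DFAs $A_i = \langle Q_i, \Sigma, \delta_i, q_0^i, F_i \rangle$ with $L(A_i) = S^{k_i}_f$ for $i = 1, \ldots, n$. I would build the CDTHFA $D = \langle Q, \Sigma, \delta, q_0, \mathcal{F} \rangle$ with $Q = Q_1 \times \cdots \times Q_n$, $q_0 = (q_0^1, \ldots, q_0^n)$, componentwise transitions $\delta((p_1, \ldots, p_n), a) = (\delta_1(p_1, a), \ldots, \delta_n(p_n, a))$, and
$$\mathcal{F}((p_1, \ldots, p_n)) = \bigsqcup_{i\,:\,p_i \in F_i} k_i,$$
interpreting the empty sup-combination as the identity $\{0\}$ of $\sqcup$ given by $(H1)$.

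A routine induction on $|w|$ yields $\widehat{\delta}(q_0, w) = (\widehat{\delta_1}(q_0^1, w), \ldots, \widehat{\delta_n}(q_0^n, w))$, so the $i$-th coordinate lies in $F_i$ iff $w \in L(A_i) = S^{k_i}_f$, i.e.\ iff $k_i \sqsubseteq f(w)$. Hence
$$f_D(w) = \mathcal{F}(\widehat{\delta}(q_0, w)) = \bigsqcup_{k_i \sqsubseteq f(w)} k_i.$$
Writing $f(w) = k_j \in R_f$, reflexivity of $\sqsubseteq$ ensures $k_j$ itself appears in the index set, while every other $k_i$ in the set satisfies $k_i \sqsubseteq k_j$, equivalently $k_i \sqcup k_j = k_j$. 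Using commutativity, associativity, and idempotency from $(H1)$, the iterated sup-combination absorbs every term into $k_j$, so $f_D(w) = k_j = f(w)$ as required.

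The main obstacle is the last collapse: I need that the sup-combination of a finite family of elements all bounded by $k_j$ and containing $k_j$ equals $k_j$. This reduces to checking that $\sqcup$ is the join for $\sqsubseteq$, i.e., that $X, Y \sqsubseteq Z$ implies $X \sqcup Y \sqsubseteq Z$; this follows from $(X \sqcup Y) \sqcup Z = X \sqcup (Y \sqcup Z) = X \sqcup Z = Z$ using $(H1)$ and the hypotheses. Combined with Theorem \ref{teo:CotaSuperior}, this pins the sup between $k_j$ and $k_j$. Everything else is routine bookkeeping on the product construction.
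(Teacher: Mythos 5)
Your proof is correct, but it takes a genuinely different route from the paper's. The paper does not go back through Theorem \ref{teo:Finitude} at all: it starts from a CNTHFA $N$ computing $f$ (available via Corollary \ref{col:C-LinguagemEregular}) and performs the classical subset construction on $2^Q$, setting $\mathcal{F}_D(X) = \bigsqcup_{q \in X} \mathcal{F}(q)$ for nonempty $X$ and observing $\widehat{\delta}(\{q_0\},w) = \widehat{\delta_N}(q_0,w)$, so that $f_D = f_N$ follows in one line. You instead decompose $f$ into its level languages $S^{k_i}_f$, realize each by a DFA, take the product, and recover the value $f(w)$ as the join of all $k_i$ with $k_i \sqsubseteq f(w)$. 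The decisive step in your version --- that $\bigsqcup_{k_i \sqsubseteq k_j} k_i = k_j$ --- is correctly justified: you verify that $\sqcup$ is the join for $\sqsubseteq$ (i.e.\ $X,Y \sqsubseteq Z$ implies $X \sqcup Y \sqsubseteq Z$, via $(X \sqcup Y)\sqcup Z = X \sqcup (Y \sqcup Z) = X \sqcup Z = Z$), combine it with Theorem \ref{teo:CotaSuperior} and antisymmetry from Proposition \ref{prop:Ordem}, and note that the index set always contains $j$ since $f(w) \in R_f$. The paper's argument is shorter and more self-contained (it needs only the CNTHFA equivalence and the monoid laws), at the cost of $2^{\#Q}$ states; yours leans on the heavier characterization theorem but produces a product automaton of size $\prod_i \#Q_i$ and makes explicit the order-theoretic structure ($\sqcup$ as join) that the paper only uses implicitly. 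Both are valid proofs of the statement.
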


\begin{proof}
    Suppose that $f \in \mathbb{T}_\mathcal{R}$, so there exists a CNTHFA $N = \langle Q, \Sigma, \delta_N, q_0, \mathcal{F}\rangle$ such that $f = f_N$, now it is sufficient to construct the CDTHFA $D = \langle 2^Q, \Sigma, \delta, \{q_0\}, \mathcal{F}_D \rangle$ where for all $(X, a) \in 2^Q \times \Sigma$ we have that:
    \begin{equation}
        \delta(X, a) = \bigcup_{q \in X}\delta_N(q, a)
    \end{equation}
    and
    \begin{eqnarray}
	    \mathcal{F}_D(X) = \left\{\begin{array}{ll} \displaystyle \bigsqcup_{q \in X} \mathcal{F}(q), & \mbox{if } X \neq \emptyset \\ \{0\}, & \mbox{else } \end{array}\right.
	\end{eqnarray}
	moreover, it is not difficult to verify that $\widehat{\delta}(\{q_0\}, w) = \widehat{\delta_N}(q_0, w)$ for all $w \in \Sigma^*$. Hence, we have that,
	\begin{eqnarray*}
	    f_D(w) & = & \mathcal{F}_D(\widehat{\delta}(\{q_0\}, w))\\
	    & = & \bigsqcup_{q \in \widehat{\delta}(\{q_0\}, w)} \mathcal{F}(q)\\
	    & = & \bigsqcup_{q \in \widehat{\delta_N}(q_0, w)} \mathcal{F}(q)\\
	    & = & f_N(w)\\
	    & = & f(w)
	\end{eqnarray*}
	completing the proof.
\end{proof}

The Theorem \ref{teo:SemNaoDeterminismo} shows that nondeterminism is not essential to compute THFL. Moreover, this lemma said that questions about $\mathbb{T}_\mathcal{R}$ elements could be seen as questions about CDTHFA.

\begin{defi}
    Let $f_1: \Sigma^* \rightarrow \mathbb{H}$ and $f_2: \Sigma^* \rightarrow \mathbb{H}$  two be THFL the $\mathbb{H}$-intersection of $f_1$ and $f_2$, denoted by $f_1 \Cap f_2$, is the THFL define by:
    \begin{eqnarray}
        f_1 \Cap f_2 (w) = f_1(w) \otimes f_2(w)
    \end{eqnarray}
\end{defi}

\begin{teo}\label{teo:inter}
    If $f_1, f_2 \in \mathbb{T}_\mathcal{R}$ on the same alphabet $\Sigma$, then $f_1 \Cap f_2 \in \mathbb{T}_\mathcal{R}$.
\end{teo}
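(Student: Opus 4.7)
The plan is to mimic the classical product construction for DFA intersection, adapted to the typical hesitant fuzzy setting. The cleanest route is to reduce to the deterministic crisp-transition form: by Theorem \ref{teo:SemNaoDeterminismo} and Corollary \ref{col:C-LinguagemEregular}, $f_1,f_2 \in \mathbb{T}_\mathcal{R}$ means there exist CDTHFAs $D_1 = \langle Q_1, \Sigma, \delta_1, q^1_0, \mathcal{F}_1 \rangle$ and $D_2 = \langle Q_2, \Sigma, \delta_2, q^2_0, \mathcal{F}_2 \rangle$ with $f_{D_1}=f_1$ and $f_{D_2}=f_2$. The overall strategy is to build one CDTHFA whose final-state output is the $\otimes$ of the two original outputs, and then invoke Corollary \ref{col:C-LinguagemEregular} in the opposite direction.

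Concretely, I would define the product CDTHFA $D = \langle Q_1 \times Q_2, \Sigma, \delta, (q^1_0, q^2_0), \mathcal{F} \rangle$ by
\begin{equation*}
    \delta((q,p), a) = (\delta_1(q,a), \delta_2(p,a))
\end{equation*}
and
\begin{equation*}
    \mathcal{F}((q,p)) = \mathcal{F}_1(q) \otimes \mathcal{F}_2(p),
\end{equation*}
for all $q \in Q_1$, $p \in Q_2$, and $a \in \Sigma$. A straightforward induction on $|w|$ using the recursion in equation (\ref{eq:ExtensaoDaFuncaoTransicaoDelta}) shows that $\widehat{\delta}((q^1_0, q^2_0), w) = (\widehat{\delta_1}(q^1_0, w), \widehat{\delta_2}(q^2_0, w))$ for every $w \in \Sigma^*$.

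From this, the computation that $f_D = f_1 \Cap f_2$ becomes a direct evaluation: $f_D(w) = \mathcal{F}(\widehat{\delta}((q^1_0, q^2_0), w)) = \mathcal{F}_1(\widehat{\delta_1}(q^1_0, w)) \otimes \mathcal{F}_2(\widehat{\delta_2}(q^2_0, w)) = f_{D_1}(w) \otimes f_{D_2}(w) = (f_1 \Cap f_2)(w)$. Finally, Corollary \ref{col:C-LinguagemEregular} gives $f_1 \Cap f_2 \in \mathbb{T}_\mathcal{R}$.

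There is no serious obstacle once the deterministic reduction is in hand; the entire argument amounts to a standard cartesian-product construction, and the only subtlety is pushing the proof through CDTHFA rather than NTHFA. Had I stayed within NTHFA, the verification would have required exchanging a $\bigsqcup$ with an $\otimes$ via the distributivity property $(H3)$ and the commutativity/associativity of both operations in $(H1)$ — not difficult, but notationally heavier, since one would need $\psi((s,p),a,(s',p')) = \psi_1(s,a,s') \otimes \psi_2(p,a,p')$ together with the rearrangement $\bigsqcup_{(s,p)} (A_s \otimes B_p) = (\bigsqcup_s A_s) \otimes (\bigsqcup_p B_p)$. The CDTHFA route sidesteps this bookkeeping entirely.
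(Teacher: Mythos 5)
Your proposal is correct and follows essentially the same route as the paper's own proof: reduce to CDTHFAs via Theorem \ref{teo:SemNaoDeterminismo}, form the cartesian-product automaton with $\delta((q,p),a)=(\delta_1(q,a),\delta_2(p,a))$ and $\mathcal{F}((q,p))=\mathcal{F}_1(q)\otimes\mathcal{F}_2(p)$, verify $\widehat{\delta}((q^1_0,q^2_0),w)=(\widehat{\delta_1}(q^1_0,w),\widehat{\delta_2}(q^2_0,w))$, and conclude membership in $\mathbb{T}_\mathcal{R}$. The only cosmetic difference is the closing citation (the paper invokes Theorems \ref{teo:Deter} and \ref{teo:SemNaoDeterminismo} where you use Corollary \ref{col:C-LinguagemEregular}), which is immaterial.
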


\begin{proof}
    Suppose that $f_1, f_2 \in \mathbb{T}_\mathcal{R}$ so by Theorem \ref{teo:SemNaoDeterminismo} there exists two CDTHFAs $D_1 = \langle Q_1, \Sigma, \delta_1, q_0, \mathcal{F}_1\rangle$ and $D_2 = \langle Q_2, \Sigma, \delta_2, p_0, \mathcal{F}_2\rangle$ such that $f_1 = f_{D_1}$ and $f_2 = f_{D_2}$. Now without loss of generality assume that $Q_1 \cap Q_2 = \emptyset$, then define $D = \langle Q_1 \times Q_2, \Sigma, \delta, (q_0, p_0), \mathcal{F} \rangle$ where:
    \begin{eqnarray}
        \delta((q, p), a) = (\delta_1(q, a), \delta_2(p, a))
    \end{eqnarray}
    with $(q, p) \in Q_1 \times Q_2, a \in \Sigma$ and 
    \begin{equation}\label{eq:finalD}
        \mathcal{F}((q, p)) = \mathcal{F}_1(q) \otimes \mathcal{F}_2(p)
    \end{equation}
    it is easy to see that $D$ is a CDTHFA and that for all $(q, p) \in Q_1 \times Q_2$ and $w \in \Sigma^*$:
    \begin{equation}\label{eq:extendendoDelta}
        \widehat{\delta}((q, p), w) = (\widehat{\delta_1}(q, w), \widehat{\delta_2}(p, w))
    \end{equation}
    hence for all $w \in \Sigma^*$,
    \begin{eqnarray*}
        f_D(w) & = & \mathcal{F}(\widehat{\delta}((q_0, p_0), w)\\
        & \stackrel{Eq. (\ref{eq:extendendoDelta})}{=} & \mathcal{F}((\widehat{\delta_1}(q_0, w),  \widehat{\delta_2}(p_0, w)))\\
        & \stackrel{Eq. (\ref{eq:finalD})}{=} & \mathcal{F}_1(\widehat{\delta_1}(q_0, w)) \otimes \mathcal{F}_1(\widehat{\delta_2}(p_0, w))\\
        & = & f_1(w) \otimes f_2(w)\\
        & = & f_1 \Cap f_2 (w)
    \end{eqnarray*}
    since $D$ is a CDTHFA, by Theorems \ref{teo:Deter} and \ref{teo:SemNaoDeterminismo} $f_1 \Cap f_2 \in \mathbb{T}_\mathcal{R}$.
\end{proof}

The above theorem result shows that the $\mathbb{H}$-intersection is a closure for the set $\mathbb{T}_\mathcal{R}$ and this result can be generalized as follows.

\begin{col}\label{col:Intersecao}
	Let $\{f_i\}_{i \in I}$ be a finite family of THFL such that $f_i \in \mathbb{T}_\mathcal{R}$ for all $i \in I$, then there exists a CDTHFA $D$ such that $\displaystyle f_D = \bigdoublecap_{i \in I} f_i$.
\end{col}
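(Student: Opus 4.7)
The plan is to proceed by induction on the cardinality $n = \#I$ of the (finite) index set, using Theorem \ref{teo:inter} as the inductive engine and Theorem \ref{teo:SemNaoDeterminismo} to convert membership in $\mathbb{T}_\mathcal{R}$ into an explicit CDTHFA. This mirrors the strategy used to lift Theorem \ref{teo:Uniao} to the $n$-ary Corollary \ref{col:Uniao}, with $\Cap$ and $\otimes$ playing the roles that $\uplus$ and $\sqcup$ played there.

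For the base case $n = 1$ the family is a single $f_i \in \mathbb{T}_\mathcal{R}$ and $\bigdoublecap_{i \in I} f_i = f_i$, so Theorem \ref{teo:SemNaoDeterminismo} delivers a CDTHFA $D$ with $f_D = f_i$. The case $n = 2$ is precisely Theorem \ref{teo:inter} combined with Theorem \ref{teo:SemNaoDeterminismo}. For the inductive step, suppose the claim holds for every family of size $n$, and let $\{f_1, \ldots, f_{n+1}\} \subseteq \mathbb{T}_\mathcal{R}$. Set $g = \bigdoublecap_{i=1}^{n} f_i$; by the induction hypothesis there exists a CDTHFA computing $g$, which in particular shows $g \in \mathbb{T}_\mathcal{R}$. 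Applying Theorem \ref{teo:inter} to $g$ and $f_{n+1}$ gives $g \Cap f_{n+1} \in \mathbb{T}_\mathcal{R}$, and one checks directly from the definition of $\Cap$ that $g \Cap f_{n+1} = \bigdoublecap_{i=1}^{n+1} f_i$. A final invocation of Theorem \ref{teo:SemNaoDeterminismo} produces the desired CDTHFA $D$ with $f_D = \bigdoublecap_{i \in I} f_i$.

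I do not expect a substantive obstacle. The only point worth verifying is that the iterated $\mathbb{H}$-intersection is well defined regardless of the bracketing used when reducing it to binary $\Cap$ operations; this follows pointwise on $\Sigma^*$ from commutativity, associativity and idempotency of $\otimes$, which are recorded in property $(H1)$. If one preferred an explicit construction to the induction, one could instead build a product CDTHFA on $Q_1 \times \cdots \times Q_n$ with componentwise transition $\delta((q_1, \ldots, q_n), a) = (\delta_1(q_1, a), \ldots, \delta_n(q_n, a))$ and final map $(q_1, \ldots, q_n) \mapsto \bigotimes_{i=1}^{n} \mathcal{F}_i(q_i)$, generalizing the two-factor construction used inside the proof of Theorem \ref{teo:inter}; but the inductive argument above is shorter and reuses the already established machinery.
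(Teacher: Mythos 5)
Your proposal is correct and follows essentially the same route as the paper, whose entire proof reads ``Using induction on $I$ and the theorem \ref{teo:inter}''; you have simply spelled out the induction that the authors leave implicit. The extra remarks on well-definedness of the iterated $\Cap$ via $(H1)$ and the alternative product construction are sound but not needed beyond what the paper intends.
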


\begin{proof}
    Using induction on $I$ and the theorem \ref{teo:inter}.
\end{proof}

\section{Conclusions}

This paper presents a characterization for the languages computed by nondeterministic typical hesitant fuzzy automata, i.e., here we prove the sufficient and necessary conditions for that a typical nondeterministic hesitant fuzzy automaton compute a typical hesitant fuzzy language. Besides, we show that typical hesitant fuzzy transitions and also that nondeterminism are not attributes necessary to compute typical hesitant fuzzy languages. This result corrects the previous result presented in \cite{valdigleis2019}, which presented the nondeterministic typical fuzzy automata as not equivalent to the deterministic counterpart.  This paper also presents the initial results about closure operators for the class of typical hesitant fuzzy languages computed by typical nondeterministic hesitant fuzzy automata. Here we prove that the $\mathbb{H}$-union and the $\mathbb{H}$-intersection are both closed for this class of languages. It is intended in future work to study the process of approximating typical hesitant fuzzy automata based on the idea of total admissible orders \cite{benja2021} and also based on partial order $\sqsubseteq$ introduced in this paper.

\nocite{*}
\bibliographystyle{fundam}
\bibliography{citations.bib}


\end{document}